\documentclass[11pt,a4paper]{article}

\usepackage{makecell}
\usepackage{authblk}
\usepackage{amsmath,amssymb,amsthm}
\allowdisplaybreaks
\usepackage[margin=1in]{geometry}
\usepackage{color}
\usepackage{float}
\usepackage{enumerate}
\usepackage{listings}
\usepackage{xcolor}
\usepackage[title]{appendix}
\newtheorem{theorem}{Theorem}[section]
\newtheorem{lemma}[theorem]{Lemma}
\newtheorem{proposition}[theorem]{Proposition}
\newtheorem{corollary}[theorem]{Corollay}

\newtheorem{remark}[theorem]{Remark}
\newtheorem{conjecture}[theorem]{Conjecture}

\numberwithin{equation}{section}

\begin{document}

\title{Recurrence Coefficients of the Orthogonal Polynomials for Oscillatory Jacobi-type Weight Functions}

\author[1,*]{Shulin Lyu}
\author[1,$\dag$]{Xun Zhou}  
\affil[1]{School of Mathematics and Statistics, Qilu University of Technology (Shandong Academy of Sciences), Jinan 250353, China}
\affil[*]{\texttt{lvshulin1989@163.com}, Corresponding author}
\affil[$\dag$]{\texttt{zhouxun200212@163.com}}
\date{\today}
\maketitle

\begin{abstract}
\sloppy
We study two classes of oscillatory Jacobi-type weight functions $x^c(1-x^2)^{\lambda-1/2}\exp(i\zeta x)$, $x\in[-1,1], \lambda>-1/2, c\in\{0,1\}$. Other restrictions are imposed on $\lambda$ and $\zeta$ to guarantee the existence of the associated orthogonal polynomials. By using the ladder operators established in the recent literature for monic orthogonal polynomials associated with Jacobi-type weight functions and three compatibility conditions, we derive two coupled difference equations satisfied by the three-term recurrence coefficients. Compared with the existing results, these equations are structurally simpler and of lower order. Once the initial values are determined, the recurrence coefficients can be computed at any stage through the difference equations. The obtained expressions enable us to conjecture the symbolic forms for the recurrence coefficients.
\end{abstract}

\section{Introduction}
In this paper, we consider two classes of oscillatory Jacobi-type weight functions. The first class is the oscillatory Gegenbauer weight, i.e.
\begin{align}\label{w1}
w_1(x)=(1-x^2)^{\lambda-\frac{1}{2}}\exp(i\zeta x), \qquad x\in[-1,1],
\end{align}
where $\lambda$ is a rational number, $\lambda>-\frac{1}{2}$ and $\zeta\in\mathbb{R}\setminus\{0\}$ is a zero of the Bessel function of the first kind $J_{\lambda-1}$. Here $i$ is the imaginary unit, i.e. $i^2=-1$, and $J_{\nu}$ is defined by \cite[p. 102]{Lebedev}:
\[J_{\nu}=\sum_{k=0}^{\infty}\frac{(-1)^k(z/2)^{\nu+2k}}{k!\Gamma(\nu+k+1)}.\]
The second class is
\begin{align}\label{w2}
w_2(x) = x(1-x^{2})^{\gamma-\frac{1}{2}} \exp(\mathit{i}\eta x) ,	 \quad x\in[-1,1] ,\quad \gamma >-\frac{1}{2}.
\end{align}
The real parameters $\gamma$ and $\eta$ are chosen such that there exists a unique sequence of monic orthogonal polynomials associated with $w_2(x)$.

It was demonstrated in \cite[Theorem 2.2]{MCM} that there exists a sequence of monic polynomials orthogonal with respect to $w_1(x)$, which extends the results of \cite{MCS} where $\lambda$ is assumed to be a positive rational number. In addition, the symbolic forms of the three-term recurrence coefficients for the orthogonal polynomials were conjectured. Moreover, by using the differential relations for the orthogonal polynomials, difference equations were deduced for the recurrence coefficients. The existence for the weight function $(1-x)^{\alpha-1/2}(1+x)^{\beta-1/2}\exp(i\zeta x),x\in[-1,1]$ was established in \cite{SC}, where $\alpha,\beta>-1/2$ are rational numbers, $\beta=\alpha\pm \ell$ with $\ell$ being a positive integer, and $\zeta$ is a positive zero of $J_{\alpha-1}$. Note that the special case $\alpha=\beta=\lambda$, i.e. $\ell=0$, which reduces to \eqref{w1}, was excluded. Difference equations satisfied by the recurrence coefficients, though not presented explicitly, are obtainable from the given results in \cite{SC} by one further step (see Theorems 3.3-3.5 therein). From these equations and given initial values, the recurrence coefficients can be computed, which enables the authors of \cite{SC} to construct the effective quadrature rules of Gaussian type.

For the weight function given by \eqref{w2}, the existence of monic orthogonal polynomials was proved when $\gamma=0$ and $\eta$ is a positive zero of the Bessel function $J_0$ (see \cite[Theorem 4]{MC05-1}), and when $\gamma=1/2$ and $\eta=m\pi$ with $m$ denoting a nonzero integer (see \cite[Theorem 2.3]{MC05-2}). For the former case (i.e. $w_2(x)=x(1-x^2)^{-1/2}\exp(i\zeta x)$), as the degree of the monic orthogonal polynomial tends to $\infty$, the asymptotic formulas for the recurrence coefficients were obtained, based on the asymptotic properties for the orthogonal polynomials associated with the weight function $(1-x^2)^{-1/2}\exp(i\zeta x)$. For the recurrence coefficients of the latter case (i.e. $w_2(x)=x\exp(i\zeta x)$), symbolic forms were conjectured with the aid of a software package and difference equations were deduced by using differential identities satisfied by the orthogonal polynomials. Gaussian quadrature rules were also constructed and applied to the numerical evaluation of highly oscillatory integrals.

Orthogonal polynomials associated with weight functions containing an oscillatory factor and the computation of oscillatory integrals by using quadrature rules were investigated in several studies. In \cite{MSM}, the weight function $x^{\alpha}{\rm e}^{-x}\exp(i\zeta x), x\in[0,+\infty),\alpha>-1,\zeta>0$ was considered. The existence of the associated monic orthogonal polynomials was proved and the corresponding quadrature rules of Gaussian type was constructed. The highly oscillatory integrals with respect to $[\log(x-a)]^{c_1}[\log(b-x)]^{c_2}e^{i\omega x}(x-a)^{-\alpha}(b-x)^{-\beta}, x\in[a,b]$, $\omega\gg1, c_1,c_2\in\{0,1\}, \alpha,\beta<1$, was evaluated numerically in \cite{KHM} by using the three-term recurrence coefficients of the orthogonal polynomials associated with Gautschi's weight $t^{s}{\rm e}^{-t}(t-1-\log t), t\in(0,\infty),s>-1$ and the corresponding Gaussian-type quadrature rules. The oscillatory integrals with kernel $x^{\alpha}(b-x)^{\beta}\exp(i\omega x^{\gamma}), x\in[0,b], |\omega|\gg1, -1<\alpha,\beta\leq0, \gamma$ being a positive integer, was computed in \cite{CYC} via an $s$-step asymptotic rule and asymptotic-Filon-type quadrature rule. See \cite{KX} for analysis on four kinds of highly oscillatory Fourier-type integrals.

Denote $w_1(x)$ and $w_2(x)$ by $w(x), x\in[-1,1]$, and the associated monic orthogonal polynomials by $\{P_n(x)\}$. Write
\begin{align}\label{mp}
P_n(x):=x^n+\textbf{p}(n)x^{n-1}+\dots+P_n(0),\quad\qquad n=1,2,\cdots,
\end{align}
with $P_0(x):=1$ and $\textbf{p}(0):=0$, and suppose
\begin{align}\label{dp}	
\int_{-1}^1P_m(x)P_n(x)w(x)dx=h_n\delta_{mn},\quad\qquad m,n=0,1,\cdots,
\end{align}
where $\delta_{mn}$ is the Kronecker delta function, i.e. $\delta_{mn}=1$ for $m=n$ and $0$ otherwise. According to \eqref{mp}-\eqref{dp}, ones gets the three-term recurrence relation
\begin{align}\label{3r}
x\,P_n(x)=P_{n+1}(x)+i\alpha_nP_n(x)+\beta_nP_{n-1}(x),\qquad\qquad n\geq0,
\end{align}
with $P_{-1}(x):=0$. In much of the literature, $\beta_0$ is set to 0, but in this paper we keep it arbitrary. The recurrence coefficients  are given by
\begin{align}
i\alpha_n=&\frac{1}{h_n}\int_{-1}^1xP_n^2(x)w(x)dx\label{al-1}\\
=&\textbf{p}(n)-\textbf{p}(n+1),\qquad\qquad n\geq0,\label{al}\\
\beta_n=&\frac{1}{h_{n-1}}\int_{-1}^1xP_n(x)P_{n-1}(x)w(x)dx\label{bt-1}\\
=&\frac{h_n}{h_{n-1}},\qquad\qquad n\geq1.\label{bt}
\end{align}
Refer to \cite[p. 22]{Ismail} and \cite{ML25}. It follows from \eqref{al} that
\begin{align}
\mathit{i}\sum\limits_{j=0}^{n-1} \alpha_{j}=-\textbf{p}(n).\label{(1.7)}
\end{align}

In \cite{LyuLyu25}, the ladder operators for the monic orthogonal polynomials associated with the Jacobi-type weight function
$(1-x)^{\alpha}(1+x)^{\beta}w_0(x),~ x\in[-1,1],~\alpha,\beta>-1$,
were established. Here $w_0(x)>0$ is continuously differentiable and bounded at $\pm1$. It should be noted that the condition $w_0(x)>0$ is imposed to guarantee the existence of the orthogonal polynomials, and it was not used in the derivation of the ladder operators and compatibility conditions. Therefore, Theorem 4.1 of \cite{LyuLyu25} also applies to the weight function $w(x)$ given by \eqref{w1} and \eqref{w2}. We restate it below.
\begin{proposition}
The monic orthogonal polynomials $\{P_n\}$ defined by \eqref{mp}-\eqref{dp} satisfy the following ladder operators:
\begin{align}
P_n'(z)=&-B_n(z)P_n(z)+\beta_nA_n(z)P_{n-1}(z),\label{loJ}\\
P_{n-1}'(z)=&\left(B_n(z)+v'(z)\right)P_{n-1}(z)-A_{n-1}(z)P_n(z),\label{roJ}
\end{align}
for $n\geq0$, where $v(x):=-\ln w(x)$. The quantities $A_n(z)$ and $B_n(z)$ are defined by
\begin{align}\label{An}
A_n(z):=\frac{1}{1-z^2}\left[  \frac{1}{h_n}\int_{-1}^{1}\frac{(1-z^2)v'(z)-(1-x^2)v'(x)}{z-x}P_n^2(x)w(x)dx+2n+1 \right],
\end{align}
for $n\geq0$ with $A_{-1}(z):=0$, and
\begin{align}\label{Bn}		 B_n(z):=&\frac{1}{1-z^2}\cdot\frac{1}{h_{n-1}}\int_{-1}^{1}\frac{(1-z^2)v'(z)-(1-x^2)v'(x)}{z-x}P_n(x)P_{n-1}(x)w(x)dx+\frac{nz-\textbf{p}(n)}{1-z^2}, \end{align}
for $n\geq1$ with $B_0(z):=0$. Furthermore, $A_n(z)$ and $B_n(z)$ satisfy the following three compatibility conditions
 \begin{align}
B_{n+1}(z)+B_n(z)&=(z-i\alpha_n)A_n(z)-v'(z)\tag{$S_1$},\\
1+(z-i\alpha_n)(B_{n+1}(z)-B_n(z))&=\beta_{n+1}A_{n+1}(z)-\beta_nA_{n-1}(z)\tag{$S_2$},			 \end{align}
for $n\geq0$, and
\begin{align}
B^2_n(z)+v'(z)B_n(z)+\sum_{j=0}^{n-1}A_j(z)&=\beta_nA_n(z)A_{n-1}(z)\tag{$S^{'}_2$},
\end{align}
for $n\geq1$.
\end{proposition}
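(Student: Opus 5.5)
We derive the ladder operators directly from the orthogonality relations \eqref{dp}, keeping track of the endpoint behaviour at $\pm1$. No positivity of $w$ will be used anywhere: only bilinear orthogonality, $h_n\neq0$, and the recurrence \eqref{3r}. This is why the result of \cite{LyuLyu25} carries over to the complex weights \eqref{w1} and \eqref{w2}.

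\textbf{Lowering relation \eqref{loJ}.} Since $P_n'$ has degree $n-1$, we expand $(1-z^2)P_n'(z)=\sum_{k=0}^{n+1}c_kP_k(z)$ and compute $c_k h_k=\int_{-1}^1(1-x^2)P_n'(x)P_k(x)w(x)\,dx$.
\begin{itemize}
\item We integrate by parts using $\frac{d}{dx}\big[(1-x^2)w\big]=-(1-x^2)v'(x)w-2xw$, which is valid since $(1-x^2)w$ vanishes at $\pm1$ when the exponents exceed $-1$.
\item We insert $(1-z^2)v'(z)$, which is added and subtracted, together with the Christoffel--Darboux kernel $\sum_{k=0}^{n-1}P_k(z)P_k(x)/h_k$.
\item This yields $P_n'(z)=\frac{1}{1-z^2}\int_{-1}^1\frac{(1-z^2)v'(z)-(1-x^2)v'(x)}{z-x}K_n(z,x)w\,dx$ plus boundary-generated terms.
\item The Christoffel--Darboux formula $K_n(z,x)=\frac{P_n(z)P_{n-1}(x)-P_{n-1}(z)P_n(x)}{h_{n-1}(z-x)}$ splits this into a $P_n$ part and a $P_{n-1}$ part. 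These produce exactly the definitions \eqref{An} and \eqref{Bn}.
\item The constants $2n+1$ and $(nz-\textbf{p}(n))$ come from the polynomial pieces $-2x$ and $\frac{(1-z^2)-(1-x^2)}{z-x}=z+x$. They are evaluated via \eqref{al-1}, \eqref{bt-1} and \eqref{(1.7)}.
\end{itemize}
The raising relation \eqref{roJ} follows by eliminating $P_{n-2}$ with \eqref{3r} once ($S_1$) is known.

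\textbf{Compatibility conditions.}
\begin{itemize}
\item $(S_1)$: use the definitions, apply \eqref{3r} to $xP_n$ inside the integrals, and compare the $B_{n+1}+B_n$ combination with $(z-i\alpha_n)A_n$. The difference of the $v'$-kernels gives the $-v'(z)$ term.
\item $(S_2)$: differentiate \eqref{3r}, substitute \eqref{loJ} for $P_{n+1}'$, $P_n'$ and $P_{n-1}'$, and equate the coefficients of $P_n$ and $P_{n-1}$. Since these two are linearly independent, this gives both $(S_1)$ and $(S_2)$ as polynomial identities in rational form.
\item $(S_2')$: multiply $(S_2)$ by $A_n$ and use $(S_1)$ to telescope. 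This gives $B_{n+1}^2-B_n^2+v'(B_{n+1}-B_n)+A_n=\beta_{n+1}A_{n+1}A_n-\beta_nA_nA_{n-1}$. Sum from $0$, using $B_0=0$ and $A_{-1}=0$.
\end{itemize}

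\textbf{Main obstacle.} The delicate step is the boundary analysis in the integration by parts. One must check that $(1-x^2)P_nP_kw$ vanishes at $\pm1$, and that the kernel $\frac{(1-z^2)v'(z)-(1-x^2)v'(x)}{z-x}$ is integrable against $w$.
\begin{itemize}
\item For $w_1$, $(1-x^2)v'(x)=(2\lambda-1)x-i\zeta(1-x^2)$ is a polynomial, so everything is fine.
\item For $w_2$, the factor $x$ contributes $-(1-x^2)/x$ to $(1-x^2)v'(x)$. This is singular at $x=0$, an interior point, and the kernel then contains $\frac{1}{zx}$-type terms.
\end{itemize}
I would handle the $w_2$ case by noting that $w_2(x)/x$ is integrable, so the integrals still converge. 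The integration by parts on $[-1,0]$ and $[0,1]$ produces boundary terms at $0$ that cancel, because $xw_0$ is continuous there. This is the point where the hypotheses of \cite{LyuLyu25} must be re-verified rather than merely quoted.
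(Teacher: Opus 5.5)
The paper gives no argument of its own for this proposition. It imports Theorem 4.1 of \cite{LyuLyu25} and only remarks that positivity of $w_0$ is never used, so a self-contained derivation like yours is a legitimate alternative. Your endpoint analysis is sound; for $w_2$ you need not even split at $0$, because $v'w=-w'$ and $(1-x^2)w_2$ are smooth across $x=0$.

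The genuine gap is in the compatibility conditions. Differentiating \eqref{3r} and inserting \eqref{loJ} gives an identity $E_1(z)P_n(z)+\beta_nE_2(z)P_{n-1}(z)=0$, whose coefficients $E_1,E_2$ are \emph{functions of $z$} built from the $A_j,B_j$. Linear independence of $P_n,P_{n-1}$ over $\mathbb{C}$ says nothing about such an identity: for instance $E_1=\beta_nP_{n-1}$, $E_2=-P_n$ satisfy it. For a general $v$ the $A_j,B_j$ are not even rational, so no degree count rescues the step. Even granting it, the $P_{n-1}$-coefficient is $E_2=D_n-D_{n-1}$ with $D_n:=(z-i\alpha_n)A_n(z)-v'(z)-B_{n+1}(z)-B_n(z)$, so it would give $(S_1)$ only modulo a base case.

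The repair is to prove $(S_1)$ directly from \eqref{An}--\eqref{Bn}, as in your first bullet. The constants match only after a second integration by parts, $\frac{1}{h_n}\int_{-1}^1(1-x^2)v'(x)P_n^2(x)w(x)\,dx=2\textbf{p}(n)-2(n+1)i\alpha_n$. Then $E_2\equiv0$, leaving $P_n(z)E_1(z)\equiv0$. Since $P_n$ has finitely many zeros and $E_1$ is continuous in $z$ (for $w_1,w_2$ even rational), $E_1\equiv0$, which is exactly $(S_2)$. With $(S_1)$ and $(S_2)$ established this way, your derivation of \eqref{roJ} and your telescoping to $(S_2')$ are correct.

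Your derivation of \eqref{loJ} is also inconsistent as written. Expanding $(1-z^2)P_n'$ over $P_0,\dots,P_{n+1}$ leads to the kernel $K_{n+2}$, not $K_n$. Adding and subtracting $(1-z^2)v'(z)$ then leaves a stray $(1-z^2)v'(z)P_n(z)$. The integrand of your intermediate formula should be $-[(1-z^2)v'(z)-(1-x^2)v'(x)]P_n(x)K_n(z,x)w(x)$; the factor $1/(z-x)$ appears only after Christoffel--Darboux.

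The clean route is as follows:
\begin{enumerate}
\item Write $P_n'(z)=\int_{-1}^1P_n'(x)K_n(z,x)w(x)\,dx$.
\item Multiply by $1-z^2=(x^2-z^2)+(1-x^2)$ inside the integral.
\item Integrate by parts only the $(1-x^2)$ piece. There are no boundary terms, since $(1-x^2)w\to0$ at $\pm1$.
\end{enumerate}
The constant in the coefficient of $\beta_nP_{n-1}(z)$ is then $2+(n-1)+n=2n+1$:
\begin{itemize}
\item the $2$ comes from $-2xw$;
\item the $n-1$ comes from the term with $(1-x^2)\partial_xK_n(z,x)$ that the integration by parts produces;
\item the $n$, together with $-(nz-\textbf{p}(n))P_n(z)$, comes from $(x^2-z^2)P_n'(x)K_n(z,x)$ via Christoffel--Darboux.
\end{itemize}
Attributing the constants only to $-2x$ and $z+x$, as you do, accounts for $n+2$ rather than $2n+1$.
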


The ladder operators and $\{(S_1), (S_2), (S_2')\}$ were widely used in literature to study real nonnegative Jacobi-type weight functions which have finite moments of all orders and contain one or more variables, including the associated monic orthogonal polynomials, the partition functions of the unitary ensembles, the gap probabilities, etc. See, for instance, \cite{BCH, CI, MC17, MC20, MC21}. It should be noted that, in previous studies, $\alpha$ and $\beta$ are usually assumed to be positive, and the expressions of $A_n$ and $B_n$ are different from \eqref{An}-\eqref{Bn}. The treatment for the Laguerre-type weight functions $x^{\lambda}w_0(x), x\in[0,+\infty)$, is analogous. See, e.g., \cite{LyuLyu25, MF25} for the derivation of the ladder operators with $\lambda>-1$ and \cite{CI10, CM12, ML25} for the applications of the ladder operators to problems with $\lambda>0$. Using the difference and Toda-type equations established for the recurrence coefficients via the ladder operators, researchers deduced large-$n$ and large-time asymptotics for the recurrence coefficients and related quantities. We refer to recent work, e.g., \cite{CJ21,MF25, MW26-1, MW26-2}.

In this paper, we apply the ladder operators \eqref{loJ}-\eqref{roJ} and $\{(S_1), (S_2), (S_2')\}$ to the two types of Jacobi-type weight functions given by \eqref{w1} and \eqref{w2}, and deduce difference equations for the recurrence coefficients $\alpha_n$ and $\beta_n$.
Our derivation process is outlined below. We first calculate $A_n$ and $B_n$ using \eqref{An} and \eqref{Bn}, which are expressed in terms of $\{\alpha_n,\beta_n\}$ and auxiliary quantities. Then we substitute the obtained expressions into $\{(S_1), (S_2')\}$ for $w_1(x)$ and into $\{(S_1),(S_2)\}$ for $w_2(x)$, and obtain a series of difference equations. Eliminating the auxiliary quantities from these equations, we finally arrive at a system of two coupled difference equations for $\alpha_n$ and $\beta_n$. The complete derivations for our weight function $w_1(x)$ and $w_2(x)$ are given in Sections 2 and 3, respectively. Our conclusions are presented in Section 4.
\section{Oscillatory-Gegenbauer Weight and Difference Equations}
For the weight function given by \eqref{w1}, we have
\[v(x)=-\ln w_1(x)=-i\zeta x-\left(\lambda-\frac{1}{2}\right)\ln (1-x^2).\]
It follows that
\[\frac{(1-z^2)v'(z)-(1-x^2)v'(x)}{z-x}=i\zeta x+i\zeta z+2\lambda-1.\]
Plugging it into \eqref{An}-\eqref{Bn}, in view of the orthogonality relation \eqref{dp} and \eqref{al}-\eqref{bt}, we come to the expressions for $A_n$ and $B_n$.
\begin{lemma}
$A_n(z)$ and $B_n(z)$ are given by
\begin{align}
A_n(z)=&\frac{i\zeta z-\zeta\alpha_n+2(n+\lambda)}{1-z^2},\qquad\qquad n\geq0,\label{An-J}\\
B_n(z)=&\frac{nz+i\zeta\beta_n-\textbf{p}(n)}{1-z^2},\qquad\qquad n\geq1.\label{Bn-J}
\end{align}
\end{lemma}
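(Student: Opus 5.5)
Substitute the kernel identity
\[\frac{(1-z^2)v'(z)-(1-x^2)v'(x)}{z-x}=i\zeta x+i\zeta z+2\lambda-1\]
into the definitions \eqref{An} and \eqref{Bn}. The kernel is linear in $x$, so each integral reduces to moments of $P_n^2w$ or $P_nP_{n-1}w$ against $1$ and $x$. Those moments are identified using the orthogonality \eqref{dp} together with the formulas \eqref{al-1}--\eqref{bt}.

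For $A_n$, split the integral as
\[
\frac{1}{h_n}\int_{-1}^1 (i\zeta x)P_n^2w\,dx+\frac{i\zeta z+2\lambda-1}{h_n}\int_{-1}^1P_n^2w\,dx .
\]
By \eqref{al-1}, the first term equals $i\zeta\cdot i\alpha_n=-\zeta\alpha_n$. The second equals $i\zeta z+2\lambda-1$, by the definition of $h_n$ in \eqref{dp}. Adding the $2n+1$ from \eqref{An} gives the numerator $i\zeta z-\zeta\alpha_n+2(n+\lambda)$, and dividing by $1-z^2$ gives \eqref{An-J}. For $n=0$ this uses only $h_0=\int w$ and the case $n=0$ of \eqref{al-1}, which is fine.

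For $B_n$ with $n\ge1$, the same split gives
\[
\frac{1}{h_{n-1}}\int_{-1}^1 (i\zeta x)P_nP_{n-1}w\,dx+\frac{i\zeta z+2\lambda-1}{h_{n-1}}\int_{-1}^1P_nP_{n-1}w\,dx .
\]
The second integral vanishes by orthogonality with $m=n-1\neq n$. By \eqref{bt-1}, the first term is $i\zeta\beta_n$. Adding $(nz-\textbf{p}(n))/(1-z^2)$ from \eqref{Bn} then yields \eqref{Bn-J}.

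The one point needing care is that the weight is complex, so $h_n$ and the "norms" are not positive. The argument, however, only uses the bilinear (not Hermitian) orthogonality \eqref{dp} and the identities \eqref{al-1}, \eqref{bt-1} stated for it, and these hold verbatim. So the computation is purely algebraic, and I expect no real obstacle beyond checking the kernel identity itself. That check is direct: $(1-z^2)v'(z)=-i\zeta(1-z^2)+(2\lambda-1)z$, and the difference quotient of $-i\zeta(1-t^2)+(2\lambda-1)t$ between $z$ and $x$ is $i\zeta(z+x)+2\lambda-1$.
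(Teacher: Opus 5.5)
Your proposal is correct and follows the paper's own route. You substitute the linear kernel $i\zeta(x+z)+2\lambda-1$ into \eqref{An}--\eqref{Bn}, evaluate the $x$-moment via \eqref{al-1} and \eqref{bt-1}, and use \eqref{dp} for the constant moment (normalisation for $A_n$, orthogonality for $B_n$), noting correctly that only the bilinear orthogonality (with $h_n\neq 0$, guaranteed by the assumed existence of the polynomials) is needed.
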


Substituting \eqref{An-J}-\eqref{Bn-J}, together with the initial values $A_{-1}(z)=0$ and $B_0(z)=0$, into $(S_1)$ and $(S_2')$, we obtain two difference equations for $\alpha_n$ and $\beta_n$.
\begin{theorem} The recurrence coefficients $\alpha_n$ and $\beta_n$ satisfy the following two coupled difference equations
\begin{align}
\alpha_{n+1}=&\frac{3}{2\zeta}\left(2n+1+2\lambda\right)-\alpha_{n}+\frac{2n+1+2\lambda}{2\zeta\beta_{n+1}}\left[\beta_{n}-1+\alpha_{n}\left(\frac{2n-1+2\lambda}{\zeta}-\alpha_{n}\right)\right],\label{diff2}\\
\beta_{n+2}=&\beta_n+\alpha_{n+1}^2-\alpha_n^2+\frac{1}{\zeta}\big[\alpha_n(2n-1+2\lambda)-\alpha_{n+1}(2n+3+2\lambda)\big],\label{diff1}
\end{align}
for $n\geq1$, with the initial conditions given by
\begin{subequations}\label{al0bt1}
\begin{align}
\alpha_0=&\frac{2\lambda}{\zeta},&\alpha_1=&\frac{2(\lambda+1)}{\zeta}-\frac{\zeta(2\lambda+1)}{\zeta^2-2\lambda},\\
\beta_{1}=&1-\frac{2\lambda}{\zeta^2}, &\beta_2=&-\frac{2(2\lambda+1)\left(\zeta ^4-\zeta ^2 \lambda  (2 \lambda +5)+4 \lambda ^2\right)}{\zeta^2(\zeta^2-2\lambda)^2}.
\end{align}
\end{subequations}
Note that $\beta_0$ can be arbitrary. It was specified in \cite{MCM} as $\sqrt{\pi}~\Gamma\left(\lambda+\frac{1}{2}\right)\left(\frac{2}{\zeta}\right)^{\lambda}J_{\lambda}(\zeta)$, whereas in much of the literature it is taken as zero. In this paper, we leave it arbitrary.
\end{theorem}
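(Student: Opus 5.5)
The plan is to substitute the expressions \eqref{An-J}--\eqref{Bn-J} of the Lemma into $(S_1)$ and $(S_2')$, clear the denominators, and compare coefficients of powers of $z$. Here
\[
v'(z)=-i\zeta+\frac{(2\lambda-1)z}{1-z^2}.
\]
We treat $\textbf{p}(n)$ as an auxiliary unknown, related to the $\alpha_j$ by \eqref{(1.7)}, and eliminate it at the end.

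\emph{Step 1: $(S_1)$.} Multiply $(S_1)$ by $1-z^2$ and write $c_n:=-\zeta\alpha_n+2(n+\lambda)$. The right-hand side becomes
\[
(z-i\alpha_n)(i\zeta z+c_n)+i\zeta(1-z^2)-(2\lambda-1)z .
\]
The $z^2$ terms cancel, and the $z$-coefficient reduces to $2n+1$, which matches the left-hand side identically. The constant terms give
\[
i\zeta(\beta_{n+1}+\beta_n)-\textbf{p}(n+1)-\textbf{p}(n)=i\zeta-i\alpha_n c_n .
\]
Using \eqref{(1.7)}, this reads
\[
\zeta(\beta_{n+1}+\beta_n)+2\sum_{j<n}\alpha_j+\alpha_n=\zeta+\zeta\alpha_n^2-2(n+\lambda)\alpha_n .
\]
Subtracting this identity at $n$ from the one at $n+1$ removes the sum. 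This yields \eqref{diff1} directly: the left side gives $\zeta(\beta_{n+2}-\beta_n)+\alpha_n+\alpha_{n+1}$, and rearranging produces the stated combination $\alpha_n(2n-1+2\lambda)-\alpha_{n+1}(2n+3+2\lambda)$.

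\emph{Step 2: $(S_2')$.} This is the main obstacle. Multiply $(S_2')$ by $(1-z^2)^2$. We need $\sum_{j=0}^{n-1}A_j(z)(1-z^2)=i\zeta nz-\zeta\sum_{j<n}\alpha_j+2\sum_{j<n}(j+\lambda)$, where the sum of the $\alpha_j$ is expressible through $\textbf{p}(n)$. Both sides then become polynomials in $z$ of degree at most $4$. The top coefficients should match identically, and the remaining coefficients give one or two relations involving $\beta_n$, $\textbf{p}(n)$, $\textbf{p}(n)^2$ and the product $\beta_n(\ldots\alpha_n\ldots)(\ldots\alpha_{n-1}\ldots)$. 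I would first use the linear relation (the $z^1$ coefficient, I expect) together with the Step 1 identity to express $\textbf{p}(n)$ in terms of $\alpha_n,\beta_n,\beta_{n+1}$. Substituting this into the remaining quadratic relation should leave an equation that is linear in $\alpha_{n+1}$ after a shift. Solving it for $\alpha_{n+1}$, with $\beta_{n+1}$ in the denominator, gives \eqref{diff2}. The delicate part is choosing which combination of coefficient identities cancels the quadratic $\textbf{p}(n)^2$ terms. If a direct elimination is messy, I would instead take the $n\to n+1$ difference of the $(S_2')$ identities, which is equivalent to using $(S_2)$.

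\emph{Step 3: initial values.} Set $\mu_k=\int_{-1}^1x^kw_1(x)\,dx$. Poisson's integral gives $\mu_0\propto\zeta^{-\lambda}J_\lambda(\zeta)$, and $\mu_1=-i\,d\mu_0/d\zeta\propto i\zeta^{-\lambda}J_{\lambda+1}(\zeta)$. Since $J_{\lambda-1}(\zeta)=0$, the Bessel recurrence gives $J_{\lambda+1}=\frac{2\lambda}{\zeta}J_\lambda$, and hence $\alpha_0=2\lambda/\zeta$ by \eqref{al-1}. Then the $n=0$ case of the Step 1 identity (with $B_0=0$, i.e.\ $\beta_0$ dropping out since $\textbf{p}(0)=0$ and $B_0$ carries no $\beta_0$) gives $\beta_1=1-2\lambda/\zeta^2$. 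Next, the $n=1$ case of the Step 2 relation gives $\alpha_1$. Finally, Step 1 at $n=1$ gives $\beta_2$. The formulas \eqref{diff2}--\eqref{diff1} themselves are claimed only for $n\ge1$, consistent with the fact that $\beta_0$ is arbitrary and enters nowhere else.
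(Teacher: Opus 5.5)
\textbf{Verdict: genuine but fixable gap in Step 2.} Your Step 1 is the paper's derivation of \eqref{diff1}, and you handle $B_0=0$ at $n=0$ correctly. The problem is Step 2, which is where \eqref{diff2} comes from: you aim the elimination at the wrong coefficient.

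After multiplying $(S_2')$ by $(1-z^2)^2$, the $z^3$ and $z^2$ coefficients cancel identically. With your $c_n$, the $z^1$ coefficient gives
\[
(2n-1+2\lambda)\bigl(i\zeta\beta_n+\textbf{p}(n)\bigr)=i\zeta^2\beta_n(\alpha_{n-1}+\alpha_n),
\]
and the $z^0$ coefficient gives
\[
\bigl(i\zeta\beta_n-\textbf{p}(n)\bigr)^2+n(n-1+2\lambda)+\zeta^2\beta_n=\beta_n c_n c_{n-1}.
\]
The linear relation alone proves \eqref{diff2}. Shift $n\to n+1$ and insert $\textbf{p}(n+1)$ from the Step 1 identity at index $n+1$; this yields
\[
\beta_{n+2}=\frac{2\zeta\beta_{n+1}(\alpha_n+\alpha_{n+1})}{2n+1+2\lambda}+\alpha_{n+1}\Bigl(\alpha_{n+1}-\frac{2n+3+2\lambda}{\zeta}\Bigr)-3\beta_{n+1}+1 .
\]
Subtracting \eqref{diff1} cancels both $\alpha_{n+1}^2$ and $\alpha_{n+1}(2n+3+2\lambda)/\zeta$. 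What remains, solved for $\alpha_{n+1}$, is exactly \eqref{diff2}. This is the paper's proof. No $\textbf{p}(n)^2$ ever has to be cancelled, and the constant-term relation is never used.

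Doing what you actually describe, feeding Step 1 into the quadratic relation at index $n+1$, does give an equation linear in $\alpha_{n+1}$. This is because $i\zeta\beta_{n+1}-\textbf{p}(n+1)=i\zeta\beta_{n+1}-\textbf{p}(n)+i\alpha_n$ involves only $\alpha_n,\beta_n,\beta_{n+1}$. But its solution is
\[
\zeta\alpha_{n+1}=2(n+1+\lambda)-\frac{(n+1)(n+2\lambda)+\zeta^2\beta_{n+1}+\bigl(i\zeta\beta_{n+1}-\textbf{p}(n+1)\bigr)^2}{\beta_{n+1}\bigl(2(n+\lambda)-\zeta\alpha_n\bigr)},
\]
which is a different rational function from \eqref{diff2}. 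Its numerator is quadratic in $\beta_{n+1}$, and it has the extra factor $2(n+\lambda)-\zeta\alpha_n$ in the denominator. It holds along the true sequence, but identifying it with \eqref{diff2} requires the $z^1$ relation anyway.

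This route also breaks down at the first step. Since $2\lambda-\zeta\alpha_0=0$ and $i\zeta\beta_1-\textbf{p}(1)=i\zeta$, the constant-term relation at $n=1$ reduces to $0=0$. So in Step 3 you must get $\alpha_1$ from the $z^1$ relation at $n=1$, namely $(1+2\lambda)(\zeta\beta_1-\alpha_0)=\zeta^2\beta_1(\alpha_0+\alpha_1)$, which does give the stated value.

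With that correction, your Step 3 is a valid and more economical alternative to the paper's Appendix A. It needs only $\mu_0,\mu_1$ and $J_{\lambda-1}(\zeta)=0$ for $\alpha_0$. Then $(S_1)$ at $n=0$ gives $\beta_1$, the $z^1$ relation at $n=1$ gives $\alpha_1$, and $(S_1)$ at $n=1$ gives $\beta_2$. The paper instead evaluates $h_1$, $h_2$ and $\alpha_1$ directly from moments up to $x^4$.
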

\begin{proof}
Inserting \eqref{An-J}-\eqref{Bn-J} into $(S_1)$ gives us
\begin{align}\label{S1}
i\zeta\left(\beta_{n+1}+\beta_n\right)-\textbf{p}(n+1)-\textbf{p}(n)=i\alpha_n\left(\zeta\alpha_n-2(n+\lambda)\right)+i\zeta.
\end{align}
According to \eqref{al}, we replace $\textbf{p}(n+1)$ by $\textbf{p}(n)-i\alpha_n$ in the above identity and get
\begin{align}\label{S1-1}
\textbf{p}(n)=\frac{1}{2}\left[i\zeta(\beta_{n+1}+\beta_n-1)+i\alpha_n(-\zeta\alpha_n+2n+1+2\lambda)\right].
\end{align}
Replacing $n$ by $n+1$ in this equality and subtracting the resulting equation from \eqref{S1-1}, in light of the fact that $\textbf{p}(n)-\textbf{p}(n+1)=i\alpha_n$ given by \eqref{al}, we are led to \eqref{diff1}.

To continue, we plug \eqref{An-J}-\eqref{Bn-J} into $(S_2')$, multiply both sides of the obtained identity by $\left(1-z^2\right)^2$ and find
\begin{equation*}
\begin{aligned}
&(2n-1+2\lambda)\left(i\zeta\beta_n-\textbf{p}(n)\right)\cdot z+(i\zeta\beta_n-\textbf{p}(n))^2+n(n-1+2\lambda)\\
&=i\zeta\beta_n\big(-\zeta(\alpha_{n-1}+\alpha_n)+2(2n-1+2\lambda)\big)\cdot z\\
&\quad+\beta_n\big(-\zeta\alpha_n+2(n+\lambda)\big)\big(-\zeta\alpha_{n-1}+2(n-1+\lambda)\big)-\zeta^2\beta_n.
\end{aligned}
\end{equation*}
Comparing the coefficient of $z$ on both sides yields
\begin{align}
(2n-1+2\lambda)\left(i\zeta\beta_n+\textbf{p}(n)\right)=&i\zeta^2\beta_n\big(\alpha_{n-1}+\alpha_n\big).\label{S2'-1}
\end{align}
Substituting \eqref{S1-1} into \eqref{S2'-1}, with $n$ replaced by $n+1$ in the resulting equation, we come to
\begin{align*}
\beta_{n+2}=\frac{2\zeta\beta_{n+1}\left(\alpha_n+\alpha_{n+1}\right)}{2n+1+2\lambda}+\alpha_{n+1}\left(\alpha_{n+1}-\frac{2n+3+2\lambda}{\zeta}\right)-3\beta_{n+1}+1.
\end{align*}
Combining it with \eqref{diff1} leads us to \eqref{diff2}.

The detailed computations of $\{\alpha_0,\alpha_1,\beta_1,\beta_2\}$ are presented in Appendix \ref{J0D-int-1}. Alternatively, their expressions can be obtained using the Mathematica code provided in Appendix \ref{IV1-M}.
\end{proof}

\begin{remark}
Two coupled difference equations were also deduced for $\alpha_n$ and $\beta_n$ in \cite[Theorem 3.5]{MCM}. One is the same as \eqref{diff1}, and the other is of second order in both $\alpha_n$ and $\beta_n$, which is more complicated than the first-order difference equation \eqref{diff2}. Consequently, given $\{\alpha_0,\alpha_1,\beta_1,\beta_2\}$ as in \eqref{al0bt1}, $\alpha_n$ and $\beta_n$ for any $n$ can be computed using \eqref{diff2}-\eqref{diff1}, which are independent of the value of $\beta_0$. However, to achieve this goal by using the difference equations in \cite{MCM}, the initial values $\{\alpha_1,\alpha_2,\beta_1,\beta_2\}$ are needed.
\end{remark}

\begin{corollary}
Since the initial values given by \eqref{al0bt1} and the coefficients that appear in the difference equations \eqref{diff2}-\eqref{diff1} are all real, we conclude that $\{\alpha_n,n\geq0\}$ and $\{\beta_n, n\geq1\}$ are real, which coincides with Lemma 3.1 of \cite{MCS} where the parameter $\lambda$ in the weight function was assumed to be positive.
\end{corollary}

Iterating the difference equations \eqref{diff2} and \eqref{diff1} for $1\leq n\leq7$ yields $\alpha_i$ for $2\leq i\leq8$ and $\beta_i$ for $3\leq i\leq9$. The obtained results together with the initial conditions given by \eqref{al0bt1} are listed in Table \ref{Table-1}. Our expressions for $\{\alpha_i,\beta_i, 0\leq i\leq 2\}$ are identical to those reported in \cite[Table 1]{MCM}, although we were unable to trace the software package that the authors used for their computation and referred to in the cited literature.
\begin{table}[H]
\centering
\renewcommand{\arraystretch}{1.2}
\setlength{\tabcolsep}{3pt}
\caption{Recurrence coefficients $\{\alpha_n\}_{n=0}^8$ and $\{\beta_n\}_{n=0}^9$ with arbitrary $\lambda$ and  $\beta_0$}\label{Table-1}
\begin{tabular}{|c|c|c|}
\hline
$n$ & $\alpha_n$ & $\beta_n$\\
\hline
0 & $\frac{2\lambda}{\zeta}$ & $\beta_{0}$ \\
\hline
1 & $\frac{2(\lambda+1)}{\zeta}-\frac{\zeta(2\lambda+1)}{\zeta^2-2\lambda}$ & $\frac{\zeta^2-2\lambda}{\zeta^2}$ \\
\hline
2 & $\frac{2(\lambda+2)}{\zeta}+\zeta\left(\frac{2\lambda+1}{\zeta^2-2\lambda}-\frac{(2\lambda+3) (\zeta^2-4\lambda^2-4\lambda)}{2X_{2}}\right)$ & $-\frac{2(2\lambda+1)X_{2}}{\zeta^2(\zeta^2-2\lambda)^2}$ \\
\hline
3 & $\frac{2(\lambda+3)}{\zeta}+(2\lambda+3)\zeta\left(\frac{ \zeta^2-4\lambda^2-4\lambda}{2X_{2}}-\frac{2X_{3}}{X_{4}}\right)$ & $\frac{(\zeta^2-2\lambda)X_{4}}{\zeta^2X_{2}^2}$  \\
\hline
4 &$\frac{2(\lambda+4)}{\zeta}+\zeta\left(\frac{2(2\lambda+3)X_{3}}{X_{4}}-\frac{(2\lambda+5)X_{5}}{X_{6}}\right)$&$-\frac{4(2\lambda+3)X_{2}X_{6}}{\zeta^2X_{4}^2}$ \\
\hline
5 &$\frac{2(\lambda+5)}{\zeta}+(2\lambda+5)\zeta\left(\frac{X_{5}}{X_{6}}-\frac{3X_{8}}{X_{9}}\right)$&$\frac{X_{4}X_{9}}{\zeta^2X_{6}^2}$ \\
\hline
6 &$\frac{2(\lambda+6)}{\zeta}+3\zeta\left(\frac{(2\lambda+5)X_{8}}{X_{9}}-\frac{(2\lambda+7)X_{11}}{2X_{12}}\right)
$& $-\frac{6(2\lambda+5)X_{6}X_{12}}{\zeta^2X_{9}^{2}}$\\
\hline
7&$\frac{2(\lambda+7)}{\zeta}+(2\lambda+7)\zeta\left(\frac{3X_{11}}{2X_{12}}-\frac{4X_{15}}{X_{16}}\right)$&$\frac{X_{9}X_{16}}{\zeta^2X_{12}^{2}}$\\
\hline
8&$\frac{2(\lambda+8)}{\zeta}+2\zeta\left[\frac{2(2\lambda+7)X_{15}}{X_{16}}-\frac{(2\lambda+9)X_{19}}{X_{20}}\right]$&$-\frac{8(2\lambda+7)X_{12}X_{20}}{\zeta^2X_{16}^{2}}$\\\hline
9&-&$\frac{X_{16}X_{25}}{\zeta^2X_{20}^{2}}$\\\hline
\end{tabular}
\vspace{0.4em}
\par {\small Note: $X_i=X_i(\zeta^2)$, with $X_i(\cdot)$ denoting a monic polynomial of degree $i$. Hence,\\ $X_i(\zeta^2)$ is a monic polynomial of degree $2i$ in $\zeta$, containing only even powers of $\zeta$.\\ The expressions of $X_i(\zeta^2)$ for $i=2,3,4,5,6,8,9$ are given in Appendix \ref{symbols},\\
while those for $i=11,12,15,16,19,20,25$ are not presented for brevity.}
\end{table}

According to Table \ref{Table-1}, we propose the following conjecture.
\begin{conjecture}
The recurrence coefficients $\alpha_n$ with $n\geq4$ have the following symbolic forms:
\begin{align*}
\alpha_{2k}=&\frac{2(\lambda+2k)}{\zeta}+k\zeta\left((2\lambda+2k-1)\frac{X_{k^2-1}(\zeta^2)}{X_{k^2}(\zeta^2)}-(2\lambda+2k+1)\frac{X_{k(k+1)-1}(\zeta^2)}{2X_{k(k+1)}(\zeta^2)}\right),\\ \alpha_{2k+1}=&\frac{2(\lambda+2k+1)}{\zeta}+(2\lambda+2k+1)\zeta\left(k\frac{X_{k(k+1)-1}(\zeta^2)}{2X_{k(k+1)}(\zeta^2)}-(k+1)\frac{X_{k(k+2)}(\zeta^2)}{X_{(k+1)^2}(\zeta^2)}\right),
\end{align*}
for $k\geq2$. Regarding $\beta_n$ with $n\geq 1$, we have
\begin{align*}
\beta_{2k} =& -2k(2\lambda+2k-1)\frac{X_{k(k-1)}(\zeta^2)\cdot X_{k(k+1)}(\zeta^2)}{\zeta^2 \left(X_{k^2}(\zeta^2)\right)^2},& \beta_{2k+1} =& \frac{X_{k^2}(\zeta^2)\cdot X_{(k+1)^2}(\zeta^2)}{\zeta^2 \left(X_{k(k+1)}(\zeta^2)\right)^2},
\end{align*}
which hold for $k\geq0$, except for $k=0$ in $\beta_{2k}$. Here $X_j(\cdot)$ is a monic polynomial of degree $j$ for $j\geq0$, with $X_0(\zeta^2):=1$ and $X_1(\zeta^2):=\zeta^2-2\lambda$. Note that $X_i(\zeta^2)$ for $i=2,3,4,5,6,8,9$ are listed in Appendix \ref{symbols}.
\end{conjecture}
\begin{remark}
The symbolic forms of the recurrence coefficients are also conjectured in \cite{MCM} (below Table 1 therein), according to the calculations of the Hankel determinant $H_n$. We rewrite them as follows:
\begin{align*}
\alpha_{2k} =& \frac{V_{k(2k+1)}(\zeta^2)}{\zeta S_{k^2}(\zeta^2)\cdot S_{k(k+1)}(\zeta^2)}, & \beta_{2k} =& -\frac{S_{k(k-1)}(\zeta^2)\cdot S_{k(k+1)}(\zeta^2)}{\zeta^2 \left(S_{k^2}(\zeta^2)\right)^2}, \\
\alpha_{2k+1} =& \frac{V_{(k+1)(2k+1)}(\zeta^2)}{\zeta S_{k(k+1)}(\zeta^2)\cdot S_{(k+1)^2}(\zeta^2)},& \beta_{2k+1} =& -\frac{S_{k^2}(\zeta^2)\cdot S_{(k+1)^2}(\zeta^2)}{\zeta^2 \left(S_{k(k+1)}(\zeta^2)\right)^2},
\end{align*}
which hold for $k\geq0$, except for $k=0$ in $\beta_{2k}$. Here $S_j(\cdot)$ and $V_j(\cdot)$ are polynomials of degree $j$, but their leading coefficients were not specified. We find from \cite{MCM} that $S_0=H_1=1$.
The expressions of $\alpha_n$ and $\beta_n$ presented in Table \ref{Table-1} are consistent with the above conjecture, where
\begin{align*}
S_1(\zeta^2)=&-(\zeta^2-2\lambda),&S_2(\zeta^2)=&2\lambda_1X_2,\\
S_4(\zeta^2)=&4\lambda_1^2X_4,&S_6(\zeta^2)=&32 \lambda_1^3\lambda_3X_6,\\ S_9(\zeta^2)=&-256\lambda_1^4\lambda_3^2X_9,&S_{12}(\zeta^2)=&12288\lambda_1^5\lambda_3^3\lambda_5X_{12},\\
S_{16}(\zeta^2)=&589824\lambda_1^6\lambda_3^4\lambda_5^2X_{16},&S_{20}(\zeta^2)=&226492416\lambda_1^7\lambda_3^5\lambda_5^3\lambda_7X_{20},\\
S_{25}(\zeta^2)=&-86973087744\lambda_1^8\lambda_3^6\lambda_5^4\lambda_7^2X_{25},&&
\end{align*}
with $\lambda_j:=2\lambda+j$, and
\begin{align*}
V_0(\zeta^2)=&2\lambda, \qquad\qquad\qquad V_1(\zeta^2)=-(\zeta^2-4\lambda^2-4\lambda),&\\
V_3(\zeta^2)=&-(2\lambda+1)\left( (6 \lambda +7)\zeta ^6-4 \lambda (\lambda+4)(2 \lambda+3)\zeta^4+32 \lambda ^2
   (2\lambda +3)\zeta ^2-32 \lambda ^3 (\lambda +2)\right).&
\end{align*}
The explicit expressions for $V_j$ with $j=6,10,15,21,28,36$ are omitted here for brevity.
From the above expressions for $S_j$, we conjecture that
\begin{align*}
S_{k^2}(\zeta^2)=&(-1)^k 2^{k(k-1)}\prod_{j=1}^{k-1}\left(j(2\lambda+2j-1)\right)^{2(k-j)}\cdot X_{k^2}(\zeta^2),\\
S_{k(k+1)}(\zeta^2)=& 2^{k^2}\prod_{j=1}^{k}\left(j(2\lambda+2j-1)\right)^{2(k-j)+1}\cdot X_{k(k+1)}(\zeta^2),
\end{align*}
which holds for $k\geq1$, except for $k=1$ in $S_{k^2}$.
\end{remark}

When $\lambda=0$, the weight function becomes $w(x)=(1-x^2)^{-\frac{1}{2}}\exp(\mathit{i}\zeta x)$, the values of ${\alpha_{n},\beta_{n}, 0 \le n \le 13}$ are given in Table \ref{Table-1-0}.
\begin{table}[H]
	\centering
	\renewcommand{\arraystretch}{1.2}
	\setlength{\tabcolsep}{3pt}
\caption{Recurrence coefficients $\{\alpha_n,\beta_n\}_{n=0}^{13}$ for $\lambda=0$ with arbitrary $\beta_0$}\label{Table-1-0}
	\begin{tabular}{|c|c|c|}
\hline
$n$ &  $\alpha_n $&$\beta_n $  \\
\hline
0  & $0$ &$\beta_{0}$\\
\hline
1  & $\frac{1}{\zeta}$ &1\\
\hline
2  & $\frac{7}{2\zeta}$ & $-\frac{2}{\zeta^2}$\\
\hline
3  & $\frac{3^3\cdot3}{14\zeta}-\frac{30\zeta}{7(\zeta^2-21/4)}$& $\frac{\zeta^2-21/4}{\zeta^2}$ \\
\hline
4 & $\frac{2^5\cdot19}{7\cdot11\zeta}+5\zeta\left(\frac{6}{7(\zeta^2-21/4)}-\frac{7 (\zeta ^2-177/8)}{11Y_{2}}\right)$&$-\frac{12Y_{2}}{\zeta^2(\zeta^2-21/4)^2}$\\
\hline
5&$\frac{5^3\cdot7}{8\cdot 11\zeta}+35\zeta\left(\frac{\zeta^2-177/8}{11Y_{2}}-\frac{3Y_{3}}{8Y_{4}}\right)$&$\frac{(\zeta^2-21/4)Y_{4}}{\zeta^2Y_{2}^2}$ \\
\hline
6&$\frac{3^4\cdot13}{8\cdot11\zeta}+21\zeta\left(\frac{5Y_{3}}{8Y_{4}}-\frac{9Y_{5}}{22Y_{6}}\right)$&$-\frac{30Y_{2}Y_{6}}{\zeta^2Y_{4}^2}$ \\
\hline
7&$\frac{7^3\cdot13}{11\cdot29\zeta}+63\zeta\left(\frac{3Y_{5}}{22Y_{6}}-\frac{12Y_{8}}{29Y_{9}}\right)$&$\frac{Y_{4}Y_{9}}{\zeta^2Y_{6}^2}$ \\
\hline
8&$\frac{4^4\cdot67}{29\cdot37\zeta}+27\zeta\left(\frac{28Y_{8}}{29Y_{9}}-\frac{22Y_{11}}{37Y_{12}}\right)$&$-\frac{56Y_{6}Y_{12}}{\zeta^2Y_{9}^{2}}$\\\hline
9&$\frac{9^3\cdot21}{23 \cdot 37\zeta}+198\zeta\left(\frac{3Y_{11}}{37Y_{12}}-\frac{5Y_{15}}{23Y_{16}}\right)$&$\frac{Y_{9}Y_{16}}{\zeta^2Y_{12}^{2}}$\\\hline
10&$\frac{5^3\cdot103}{23\cdot28\zeta}+5\zeta\left(\frac{198Y_{15}}{23Y_{16}}-\frac{143Y_{19}}{28Y_{20}}\right)$&$-\frac{90Y_{12}Y_{20}}{\zeta^2Y_{16}^{2}}$\\\hline
11&$\frac{11^3\cdot31}{28\cdot67\zeta}+715\zeta\left(\frac{Y_{19}}{28Y_{20}}-\frac{6Y_{24}}{67Y_{25}}\right)$&$\frac{Y_{16}Y_{25}}{\zeta^2Y_{20}^{2}}$\\\hline
12&$\frac{6^3\cdot588}{67\cdot79\zeta}+195\zeta\left(\frac{22Y_{24}}{67Y_{25}}-\frac{15Y_{29}}{79Y_{30}}\right)$&$-\frac{132Y_{20}Y_{30}}{\zeta^2Y_{25}^2}$\\\hline
13&$\frac{13^3\cdot43}{46\cdot79\zeta}+585\zeta\left(\frac{5Y_{29}}{79Y_{30}}-\frac{7Y_{35}}{46Y_{36}}\right)$&$\frac{Y_{25}Y_{36}}{\zeta^2Y_{30}^2}$\\\hline
\end{tabular}
\vspace{0.4em}
\par {\small Note: $Y_i=Y_i(\zeta^2)$, with $Y_i(\cdot)$ denoting a monic polynomial of degree $i$. Hence,\\ $Y_i(\zeta^2)$ is a monic polynomial of degree $2i$ in $\zeta$, containing only even powers of $\zeta$.\\ The expressions of $Y_i(\zeta^2)$ for $i=2,3,4,5,6,8,9,11,12,15,16$ are given in Appendix \ref{symbols},\\
while the others are omitted due to their length.}
\end{table}

\begin{conjecture}
The recurrence coefficients for the monic orthogonal polynomials associated with the weight function $(1-x^2)^{-1/2}\exp(i\zeta x), x\in[-1,1]$, with $\zeta\in\mathbb{R}\setminus\{0\}$ being a zero of $J_{-1}$, have the following symbolic forms:
\begin{align*}
\alpha_{2k}=&\frac{4k^3(4k^2+3)}{(k(2k-1)+1)(k(2k+1)+1)\zeta}\\
&+k(k-1)(2k+1)\zeta\left(\frac{2k-1}{k(2k-1)+1}\cdot\frac{Y_{(k-1)^2-1}(\zeta^2)}{Y_{(k-1)^2}(\zeta^2)}-\frac{2k+3}{2(k(2k+1)+1)}\cdot\frac{Y_{k(k-1)-1}(\zeta^2)}{Y_{k(k-1)}(\zeta^2)}\right), \end{align*}
for $k\geq3$,
\begin{align*}
\alpha_{4j-1}=&\frac{(4j-1)^3(2j(2j-1)+1)}{(4j^2-3j+1)(2j(4j-1)+1)\zeta}\\
&+(2j-1)(4j-1)(4j+1)\zeta\left(\frac{j-1 }{(2j-1)(4j-1)+1}\cdot\frac{Y_{2(j-1)(2j-1)-1}(\zeta^2)}{Y_{2(j-1)(2j-1)}(\zeta^2)}\right.\\
&\left.\qquad\qquad\qquad\qquad\qquad\qquad\quad-\frac{2j }{2j(4j-1)+1}\cdot\frac{Y_{(2j-1)^2-1}(\zeta^2)}{Y_{(2j-1)^2}(\zeta^2)}\right),
\end{align*}
\begin{align*}
\alpha_{4j+1}=&\frac{(4j+1)^3(2j(2j+1)+1)}{(4j^2+3j+1)(2j(4j+1)+1)\zeta}\\
&+(4j+1)(4j+3)\zeta \left(\frac{j(2j-1) }{2j(4j+1)+1}\cdot\frac{Y_{2j(2j-1)-1}(\zeta^2)}{Y_{2j(2j-1)}(\zeta^2)}-\frac{2j(2j+1) }{(2j+1)(4j+1)+1}\cdot\frac{Y_{(2j)^2-1}(\zeta^2)}{Y_{(2j)^2}(\zeta^2)}\right),
\end{align*}
for $j\geq2$, and
\begin{align*}
\beta_{2k} =& -2k(2k-1)\frac{Y_{(k-1)(k-2)}(\zeta^2)\cdot Y_{k(k-1)}(\zeta^2)}{\zeta^2 \left(Y_{(k-1)^2}(\zeta^2)\right)^2},& \beta_{2k+1} =& \frac{Y_{(k-1)^2}(\zeta^2)\cdot Y_{k^2}(\zeta^2)}{\zeta^2 \left(Y_{k(k-1)}(\zeta^2)\right)^2},
\end{align*}
for $k\geq1$ with $Y_1(\zeta^2):=\zeta^2-21/4$.
\end{conjecture}

\section{Oscillatory Jacobi-type Weight  and Difference Equations}
For the weight function given by \eqref{w2}, we have
\begin{align}
v(x)=-\ln w_2(x)=-\ln x-\left(\gamma-\frac{1}{2}\right)\ln (1-x^{2})-\mathit{i}\eta x,
\end{align}
so that
\begin{align}
v'(x)=-\frac{1}{x}+\frac{(2\gamma-1)x}{1-x^{2}}-\mathit{i}\eta.
\end{align}
Consequently,
\begin{align*}
\frac{(1-z^{2})v'(z)-(1-x^{2})v'(x)}{z-x}=\frac{1}{zx}+2\gamma+ \mathit{i}\eta x+\mathit{i}\eta z.
\end{align*}
Substituting it into \eqref{An} and \eqref{Bn} , we get the expressions for $ A_{n}(z) $ and $ B_{n}(z) $ .
\begin{lemma}
We have
	\begin{align}
A_{n}(z)=&\frac{\mathit{i}\eta z+2n+2\gamma+1-\eta\alpha_n}{1-z^{2}} +\frac{R_{n}(\zeta)}{z(1-z^{2})},\label{(2.8)}\\
B_{n}(z)=&\frac{nz+\mathit{i}\eta\beta_n-\textbf{p}(n)}{1-z^{2}} +\frac{r_{n}(\zeta)}{z(1-z^{2})}, \label{(2.9)}
\end{align}
where
\begin{align}
R_{n}(\zeta):=&\frac{1}{h_{n}}\int_{-1}^{1}P_{n}^2(x)(1-x^{2})^{\gamma-\frac{1}{2}}\exp(\mathit{i}\eta x)\,dx ,\qquad n\geq0,\\
r_{n}(\zeta):=&\frac{1}{h_{n-1}}\int_{-1}^{1}P_{n}(x)P_{n-1}(x)(1-x^{2})^{\gamma-\frac{1}{2}}\exp(\mathit{i}\eta x)\,dx,\qquad n\geq1.
\end{align}
\end{lemma}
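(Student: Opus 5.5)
The plan is to substitute the kernel
\[
\frac{(1-z^2)v'(z)-(1-x^2)v'(x)}{z-x}=\frac{1}{zx}+2\gamma+i\eta x+i\eta z
\]
into \eqref{An} and \eqref{Bn} and evaluate each of the four pieces separately. The moments are taken against $P_n^2(x)w_2(x)$ (respectively $P_n(x)P_{n-1}(x)w_2(x)$), using orthogonality \eqref{dp} and the integral representations \eqref{al-1}, \eqref{bt-1}. First I would verify the kernel identity itself. Since $(1-x^2)v'(x)=-\frac{1-x^2}{x}+(2\gamma-1)x-i\eta(1-x^2)$, the difference splits into three parts:
\[
-\Big(\tfrac1z-\tfrac1x\Big)+(z-x)=\frac{z-x}{zx}+(z-x),\qquad (2\gamma-1)(z-x),\qquad -i\eta\big(x^2-z^2\big)=i\eta(z-x)(z+x).
\]
Dividing by $z-x$ gives $\frac{1}{zx}+1+2\gamma-1+i\eta(x+z)$, as claimed.

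For $A_n$, the term $2\gamma+i\eta z$ is constant in $x$, so it contributes $2\gamma+i\eta z$ times $h_n/h_n=1$. The term $i\eta x$ contributes $i\eta\cdot\frac1{h_n}\int xP_n^2w_2\,dx=i\eta\cdot i\alpha_n=-\eta\alpha_n$ by \eqref{al-1}. The term $\frac{1}{zx}$ contributes $\frac1z\cdot\frac{1}{h_n}\int P_n^2(x)\frac{w_2(x)}{x}\,dx$. Because $w_2(x)/x=(1-x^2)^{\gamma-1/2}\exp(i\eta x)$, this is exactly $R_n/z$. Adding the $2n+1$ already present in \eqref{An} and dividing by $1-z^2$ yields \eqref{(2.8)}.

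For $B_n$ I would proceed the same way. The constant term $2\gamma+i\eta z$ integrates to zero against $P_nP_{n-1}w_2$ by orthogonality. The term $i\eta x$ gives $i\eta\,\beta_n$ by \eqref{bt-1}. The term $\frac{1}{zx}$ gives $r_n/z$ after cancelling the factor $x$ in $w_2$. Combining these with the explicit summand $\frac{nz-\textbf{p}(n)}{1-z^2}$ in \eqref{Bn} yields \eqref{(2.9)}. The boundary cases need only a remark: $n\ge0$ for $A_n$, and $n\ge1$ for $B_n$, so that $P_{n-1}$ exists.

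The only point requiring care is the legitimacy of the $\frac{1}{zx}$ term, since $1/x$ is singular at $0$. It is harmless here because the factor $x$ in $w_2$ cancels it. The resulting integrands $P_n^2(x)(1-x^2)^{\gamma-1/2}e^{i\eta x}$ are integrable on $[-1,1]$ for $\gamma>-\frac12$, so $R_n$ and $r_n$ are well defined. I would also note that the argument $\zeta$ written in $R_n(\zeta)$ and $r_n(\zeta)$ is just a label for the dependence on the parameter $\eta$; nothing else in the argument depends on it.
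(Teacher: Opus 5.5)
Your proposal is correct and follows the paper's own argument. Like the paper, you substitute the kernel $\frac{1}{zx}+2\gamma+i\eta x+i\eta z$ into \eqref{An}--\eqref{Bn} and reduce each piece using orthogonality and the integral formulas for $\alpha_n$ and $\beta_n$. You also check the kernel identity explicitly and note that the factor $x$ in $w_2$ cancels the $1/x$ singularity; the paper leaves both implicit.
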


Inserting \eqref{(2.8)} and \eqref{(2.9)} into {$(S_1)$}, and equating the coefficients of
$z^{-1}$, $(z-1)^{-1}$ and $(z+1)^{-1}$ on both sides, we obtain
\begin{gather}
r_{n+1}+r_{n}=1-\mathit{i}\alpha_{n}R_{n},\label{(2.12)}
\end{gather}
\begin{gather}
	\begin{split}
&-\mathit{i}\eta(\beta_{n+1}+\beta_{n})+\textbf{p}(n+1)+\textbf{p}(n)-2n-1-r_{n+1}-r_{n}\\
&\qquad\qquad=2\gamma-1-(1-\mathit{i}\alpha_{n})(2n+2\gamma+1-\eta\alpha_{n}+\mathit{i}\eta+R_{n}),\label{(2.13)}
	\end{split}
\end{gather}
\begin{gather}
	\begin{split}
&\mathit{i}\eta(\beta_{n+1}+\beta_{n})-\textbf{p}(n+1)-\textbf{p}(n)-2n-1-r_{n+1}-r_{n}\\
&\qquad\qquad=2\gamma-1-(1+\mathit{i}\alpha_{n})(2n+2\gamma+1-\eta\alpha_{n}-\mathit{i}\eta-R_{n}).\label{(2.14)}
	\end{split}
\end{gather}
From $(S_2)$, we find
\begin{gather}
-\mathit{i}\alpha_{n}(r_{n+1}-r_{n})=\beta_{n+1}R_{n+1}-\beta_{n}R_{n-1},\label{(2.16)}
\end{gather}
\begin{gather}
\begin{split}
&\left(1-\mathit{i}\alpha_{n}\right)\left[\mathit{i}\eta(\beta_{n+1}-\beta_{n})-\textbf{p}(n+1)+\textbf{p}(n)+r_{n+1}-r_{n}+1\right]\\
&\;=\beta_{n+1}\left(2n+2\gamma+3-\eta\alpha_{n+1}+\mathit{i}\eta+R_{n+1}\right)-\beta_{n}\left(2n+2\gamma-1-\eta\alpha_{n-1}+\mathit{i}\eta+R_{n-1}\right),\label{(2.17)}
\end{split}
\end{gather}
\begin{gather}
	\begin{split}
&-\left(1+\mathit{i}\alpha_{n}\right)\left[\mathit{i}\eta(\beta_{n+1}-\beta_{n})-\textbf{p}(n+1)+\textbf{p}(n)-r_{n+1}+r_{n}-1\right]\\
&\;=\beta_{n+1}\left(2n+2\gamma+3-\eta\alpha_{n+1}-\mathit{i}\eta-R_{n+1}\right)-\beta_{n}\left(2n+2\gamma-1-\eta\alpha_{n-1}-\mathit{i}\eta-R_{n-1}\right).\label{(2.18)}
\end{split}
\end{gather}

Using \eqref{(2.12)}-\eqref{(2.14)} and \eqref{(2.17)}-\eqref{(2.18)}, we express $R_n$ and $r_n$ in terms of the recurrence coefficients and $\textbf{p}(n)$.

\begin{lemma}
	The auxiliary quantities $R_{n}$ and $r_{n}$ admit the following expressions
	\begin{align}
R_{n}=&\mathit{i}\alpha_{n}(2n+2\gamma+2-\eta\alpha_{n})+\mathit{i}\eta(\beta_{n+1}+\beta_{n}-1)-2\textbf{p}(n), \label{(3.1)}
\end{align}
\begin{align}
		\begin{split}
2r_{n}=&2+2\mathit{i}\alpha_{n}\textbf{p}(n)+\alpha_{n}^{2}(2n+2\gamma+3-\eta\alpha_{n})-\eta\alpha_{n}\\
&-\beta_{n+1}(2n+2\gamma+3-\eta\alpha_{n+1}-2\eta\alpha_{n})+\beta_{n}(2n+2\gamma-1-\eta\alpha_{n-1}).\label{(3.2)}
		\end{split}
	\end{align}
\end{lemma}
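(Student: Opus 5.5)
The plan is to treat \eqref{(2.13)}--\eqref{(2.14)} and \eqref{(2.17)}--\eqref{(2.18)} as two pairs of linear equations in the auxiliary quantities. In each pair the auxiliary terms enter with opposite signs in the two members, so forming the difference or the sum of the two members isolates the desired combination. Throughout, $\textbf{p}(n+1)$ is replaced by $\textbf{p}(n)-i\alpha_n$ via \eqref{al}.

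For $R_n$, I subtract \eqref{(2.14)} from \eqref{(2.13)}. Set $K:=2n+2\gamma+1-\eta\alpha_n$.
\begin{itemize}
\item On the left, the terms $2n+1+r_{n+1}+r_n$ cancel, leaving $-2i\eta(\beta_{n+1}+\beta_n)+2(\textbf{p}(n+1)+\textbf{p}(n))$.
\item On the right, the constants $2\gamma-1$ cancel. Expanding $-(1-i\alpha_n)(K+i\eta+R_n)+(1+i\alpha_n)(K-i\eta-R_n)$, the terms $\pm K$ and $\pm\alpha_n\eta$ cancel, as do $\pm i\alpha_nR_n$. What remains is $-2i\eta-2R_n+2i\alpha_nK$.
\end{itemize}
Solving for $R_n$ gives
\begin{align*}
R_n=i\alpha_nK-i\eta+i\eta(\beta_{n+1}+\beta_n)-\textbf{p}(n+1)-\textbf{p}(n).
\end{align*}
Substituting $\textbf{p}(n+1)=\textbf{p}(n)-i\alpha_n$ turns $i\alpha_nK$ into $i\alpha_n(K+1)$ and yields exactly \eqref{(3.1)}.

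For $r_n$, I combine \eqref{(2.17)} and \eqref{(2.18)}. Their sum, after the same substitution $\textbf{p}(n)-\textbf{p}(n+1)=i\alpha_n$, is an identity in which $r_{n+1}-r_n$ is the only $r$-quantity. The $R_{n\pm1}$ terms appear there only through $\beta_{n+1}R_{n+1}-\beta_nR_{n-1}$. I would eliminate that combination using \eqref{(2.16)}, or equivalently use the difference of \eqref{(2.17)} and \eqref{(2.18)}, which should reproduce \eqref{(2.16)}. This gives an explicit formula for $r_{n+1}-r_n$ in terms of $\alpha_{n-1},\alpha_n,\alpha_{n+1},\beta_n,\beta_{n+1}$. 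Adding it to \eqref{(2.12)}, namely $r_{n+1}+r_n=1-i\alpha_nR_n$, and inserting \eqref{(3.1)} for $R_n$ then expresses $2r_n$, or $2r_{n+1}$, directly. The $\textbf{p}(n)$-dependence enters through $-i\alpha_n\cdot(-2\textbf{p}(n))=2i\alpha_n\textbf{p}(n)$, consistent with the target formula \eqref{(3.2)}.

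The main obstacle is this last bookkeeping step. I must choose which of $r_n$ and $r_{n+1}$ to solve for so that the final expression involves only $\beta_n,\beta_{n+1},\alpha_{n-1},\alpha_n,\alpha_{n+1}$ and $\textbf{p}(n)$, as in \eqref{(3.2)}. I also need to check that the $\pm i\eta$ and $\pm R_{n\pm1}$ terms really cancel between \eqref{(2.17)} and \eqref{(2.18)}. If a direct combination leaves residual $R_{n\pm1}$ terms, I would substitute \eqref{(3.1)} at indices $n\pm1$ and simplify, verifying the result against the initial values as a check.
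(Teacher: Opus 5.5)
Your proposal is correct and follows the paper's proof: subtract \eqref{(2.14)} from \eqref{(2.13)} to get $R_n$, add \eqref{(2.17)} and \eqref{(2.18)} to get $r_{n+1}-r_n$ (this is \eqref{s2-4}), then combine with \eqref{(2.12)} and \eqref{(3.1)}. Both of your hedges resolve cleanly. On the right-hand sides of the sum, the $\pm i\eta$ and $\pm R_{n\pm1}$ terms cancel identically, so \eqref{(2.16)} is not needed. And it is \emph{subtracting} \eqref{s2-4} from \eqref{(2.12)}, not adding, that gives $2r_n$ in exactly the form \eqref{(3.2)}.
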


\begin{proof}
Subtracting \eqref{(2.14)} from \eqref{(2.13)}, and replacing $\textbf{p}(n+1)$ in the resulting identity using \eqref{al}, we come to \eqref{(3.1)}.

Adding \eqref{(2.17)} and \eqref{(2.18)}, and using \eqref{al} to eliminate $\textbf{p}(n+1)$ in the obtained equation, we get
\begin{align}\label{s2-4}
r_{n+1}-r_{n}=\beta_{n+1}\left(2n+2\gamma+3-\eta\alpha_{n+1}-\eta\alpha_{n}\right)-\beta_{n}\left(2n+2\gamma-1-\eta\alpha_{n}-\eta\alpha_{n-1}\right)-\alpha_{n}^{2}-1.
\end{align}	
The subtraction of it from \eqref{(2.12)} where $R_n$ is replaced using \eqref{(3.1)} leads us to \eqref{(3.2)}.
\end{proof}

Using \eqref{(3.1)}-\eqref{(3.2)} to eliminate the auxiliary quantities in \eqref{(2.12)} and in the equation obtained by substituting \eqref{s2-4} into \eqref{(2.16)}, we arrive at the following result.
	\begin{proposition}
		$\textbf{p}(n)$ and the recurrence coefficients satisfy the following coupled difference equations:
	\begin{align}\label{diffpn-1}
			 \textbf{p}(n)=\textbf{p}(n-1)-\mathit{i}\alpha_{n-1},
	\end{align}
\begin{equation}\label{diffbt-1}
\begin{aligned}
			\beta_{n+1}
=& \frac{\beta_{n-1}}{\eta\beta_{n}}\left[(2\gamma_{n}-3)\alpha_{n-1}+2(\gamma_{n}-2)\alpha_{n-2}\right.\\
&\left.\qquad\quad+\eta\left(\beta_{n-1}+\beta_{n-2}-\alpha_{n-1}^2-\alpha_{n-2}^2-\alpha_{n-2}\alpha_{n-1}-1\right)\right]\\
&+\frac{\alpha_{n-1}(\alpha_{n-1}^2+1) }{\eta\beta_{n}}+\frac{2i\textbf{p}(n-1)}{\eta}\left(\frac{\beta_{n-1}}{\beta_{n}}-1\right)+\alpha_{n}^2+\alpha_{n-1}^2+\alpha_{n}\alpha_{n-1}\\
&-\frac{1}{\eta}\left[2(\gamma_{n}+1)\alpha_{n}+(2\gamma_{n}+3)\alpha_{n-1}\right]-\beta_{n}+1,
\end{aligned}
\end{equation}
\begin{equation}\label{diffal-1}
\begin{aligned} \alpha_{n+1}=&\frac{1}{\eta\beta_{n+1}}\left[\left(\alpha_{n-1}-\alpha_{n}\right)\left(2\mathit{i}\textbf{p}(n)-\eta\right)+\left(\eta\alpha_{n}-2\gamma_{n}-3\right)\alpha_{n}^2-\left(\eta\alpha_{n-1}-2\gamma_{n}+3\right)\alpha_{n-1}^2\right.\\
&\left.\qquad\quad			 +\beta_{n}\big(\eta(\alpha_{n-1}-\alpha_{n})+2\big)+\beta_{n-1}\big(\eta (2\alpha_{n-1}+\alpha_{n-2})-2\gamma_{n}+3\big)-2\right]\\
&-2\alpha_{n}+\frac{2\gamma_{n}+3}{\eta},
\end{aligned}
\end{equation}	
for $n\geq3$, where $\gamma_{n}:=\gamma+n$.
	\end{proposition}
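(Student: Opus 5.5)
The plan is to eliminate the auxiliary quantities $R_n$ and $r_n$ with the expressions \eqref{(3.1)}--\eqref{(3.2)} of the preceding lemma, using exactly the two relations named before the statement: \eqref{(2.12)}, and the combination of \eqref{s2-4} with \eqref{(2.16)}.

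Equation \eqref{diffpn-1} needs no work. It is \eqref{al} with $n$ replaced by $n-1$. Throughout the rest of the argument I will use it, in the form $\textbf{p}(n)=\textbf{p}(n-1)-i\alpha_{n-1}$ and its shifts, so that every occurrence of $\textbf{p}(\cdot)$ is expressed through a single value, $\textbf{p}(n-1)$ or $\textbf{p}(n)$, plus $\alpha$'s. This is why $\textbf{p}$ still appears explicitly in \eqref{diffbt-1} and \eqref{diffal-1}.

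For \eqref{diffal-1}, I will substitute \eqref{(3.2)} (at indices $n$ and $n+1$) and \eqref{(3.1)} (at index $n$) into \eqref{(2.12)}, i.e. into $r_{n+1}+r_n=1-i\alpha_nR_n$. The expression $2r_{n+1}$ contains $\beta_{n+2}$ and $\alpha_{n+1}$; after shifting $n\to n-1$ this becomes a relation involving $\beta_{n+1}\alpha_{n+1}$, $\beta_{n+1}$, $\beta_n$, $\beta_{n-1}$, $\alpha_n$, $\alpha_{n-1}$, $\alpha_{n-2}$ and $\textbf{p}$. The term $2i\alpha_{n+1}\textbf{p}(n+1)$ will be rewritten via \eqref{diffpn-1}, which should produce the factor $(\alpha_{n-1}-\alpha_n)(2i\textbf{p}(n)-\eta)$ after collecting terms.

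The key structural point is that $\alpha_{n+1}$ (after the shift) enters linearly, multiplied by $\eta\beta_{n+1}$, coming from the term $\beta_{n+1}\eta\alpha_{n+1}$ in \eqref{(3.2)}. The cubic terms $\alpha^3$ cancel or remain only at lower indices. Solving for $\alpha_{n+1}$ then gives \eqref{diffal-1}, and the trailing terms $-2\alpha_n+(2\gamma_n+3)/\eta$ come from the $\beta_{n+1}$ coefficient in \eqref{(3.2)}.

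For \eqref{diffbt-1}, I will insert \eqref{s2-4} into \eqref{(2.16)}, which yields
\[
-i\alpha_n\bigl[\beta_{n+1}(\cdots)-\beta_n(\cdots)-\alpha_n^2-1\bigr]=\beta_{n+1}R_{n+1}-\beta_nR_{n-1}.
\]
I will then replace $R_{n+1}$ and $R_{n-1}$ by \eqref{(3.1)}. $R_{n+1}$ contains $\beta_{n+2}$ linearly, with coefficient $i\eta\beta_{n+1}$, so after shifting $n\to n-1$ the relation is linear in $\beta_{n+1}$ with coefficient $i\eta\beta_n$. Dividing by this coefficient gives \eqref{diffbt-1}, with the prefactor $\beta_{n-1}/(\eta\beta_n)$ originating from the $\beta_nR_{n-1}$ term, which becomes $\beta_{n-1}R_{n-2}$ after the shift. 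All $\textbf{p}(\cdot)$ values will be reduced to $\textbf{p}(n-1)$ using \eqref{diffpn-1}, and this reduction is the source of the term $\frac{2i\textbf{p}(n-1)}{\eta}\bigl(\frac{\beta_{n-1}}{\beta_n}-1\bigr)$. The factor $i$ appears on both sides and is divided out, which shows why the result has real coefficients apart from the $i\textbf{p}$ terms.

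The main obstacle is purely algebraic bookkeeping: tracking the index shifts and making sure that the cubic and $\textbf{p}$-dependent terms cancel into the compact grouped forms stated. I would verify each identity with a computer algebra check, substituting generic symbols and comparing both sides.

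The restriction $n\ge3$ comes from the index bookkeeping. After the shift, the relations involve $\alpha_{n-2}$, $\beta_{n-2}$ and $R_{n-2}$, and the formulas \eqref{(3.2)} and \eqref{s2-4} are valid only for indices $\geq1$ (since $r_n$ is defined for $n\ge1$). Requiring all these to be well defined forces $n\ge3$.
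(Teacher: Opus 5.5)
Your proposal is correct and is essentially the paper's proof. As in the paper, \eqref{diffpn-1} is \eqref{al} shifted; \eqref{diffal-1} comes from \eqref{(2.12)} at index $n-1$ after eliminating $r_n,r_{n-1},R_{n-1}$ via \eqref{(3.1)}--\eqref{(3.2)} and using $\textbf{p}(n-1)=\textbf{p}(n)+i\alpha_{n-1}$; and \eqref{diffbt-1} comes from inserting \eqref{s2-4} into \eqref{(2.16)}, eliminating $R_{n\pm1}$ with \eqref{(3.1)}, reducing the $\textbf{p}$'s to a single value, and shifting $n\to n-1$. One small correction to your last paragraph: \eqref{(3.2)} and \eqref{s2-4} at index $n-1$ only require $n\ge2$; what actually forces $n\ge3$ is the use of \eqref{(3.1)} at index $n-2$, which fails at index $0$ because $B_0=0$ is not of the form \eqref{(2.9)} when $\beta_0$ is arbitrary (at $n=2$ the arbitrary $\beta_0=\beta_{n-2}$ would enter \eqref{diffbt-1}).
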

	\begin{proof}
Equation \eqref{diffpn-1} is actually  \eqref{al} with $n$ replaced by $n-1$.

Plugging \eqref{s2-4} into \eqref{(2.16)} and using \eqref{(3.1)} to eliminate $R_{n\pm1}$, we obtain an equation involving the recurrence coefficients and $ \textbf{p}(n\pm1)$. By replacing $\textbf{p}(n+1)$ with  $\textbf{p}(n)-\mathit{i}\alpha_{n}$ and $\textbf{p}(n-1)$ with $\mathit{i}\alpha_{n-1}+\textbf{p}(n)$ in this equation, we arrive at \eqref{diffbt-1}.

Replacing $n$ with $n-1$ in \eqref{(2.12)} gives
		\begin{align*}
			 r_{n}+r_{n-1}=1-\mathit{i}\alpha_{n-1}R_{n-1}.
		\end{align*}	
Substituting $\{r_{n},r_{n-1}\}$ and $R_{n-1}$ into the above identity using \eqref{(3.1)} and \eqref{(3.2)}, we get an equation involving the recurrence coefficients, $\textbf{p}(n)$ and $\textbf{p}(n-1)$. With $\textbf{p}(n-1)$ replaced by $\mathit{i}\alpha_{n-1}+\textbf{p}(n)$ in this equation, we come to \eqref{diffal-1}.
	\end{proof}

Rewriting \eqref{diffpn-1}-\eqref{diffal-1} as follows:
\begin{align*}
 \textbf{p}(n)=&\textbf{p}(n-1)-\mathit{i}\alpha_{n-1},\\
\beta_{n+1}=&f\left(\beta_n,\beta_{n-1},\beta_{n-2}, \alpha_n,\alpha_{n-1},\alpha_{n-2},\textbf{p}(n-1)\right),\\ \alpha_{n+1}=&g\left(\beta_{n+1},\beta_n, \beta_{n-1}, \alpha_n,\alpha_{n-1}, \alpha_{n-2}, \textbf{p}(n)\right),
 \end{align*}
for $n\geq3$, we find that they can be iterated in $n$, given the initial values $\textbf{p}(1)$ and $\{\alpha_i, \beta_i,i=1,2,3\}$. When $\lambda=0$ and $\zeta$ is a positive zero of the Bessel function $J_0$ , or $\lambda=1/2$ and $\zeta=m\pi$ with $m$ denoting a nonzero integer, we use \eqref{dp}-\eqref{al} and \eqref{bt} to calculate these initial values. By substituting them into \eqref{diffbt-1}-\eqref{diffal-1}  for $n=3,4,\dots$ and iterating, we can obtain $\alpha_{i},\beta_i$ and $\textbf{p}(i)$ for arbitrary $i$.

Eliminating $\textbf{p}(n)$ from \eqref{diffpn-1}-\eqref{diffal-1} leads us to the following theorem.
\begin{theorem}
The recurrence coefficients satisfy the following coupled difference equations:
\begin{equation}\label{diff-al}
\begin{aligned}
\alpha_{n+1}=&\frac{1}{\eta\beta_{n+1}}\left\{-3\alpha_{n}^2-2\alpha_{n}\alpha_{n-1}-\alpha_{n-1}^2+2(\beta_{n}-1)+\beta_{n-1}(\eta(2\alpha_{n-1}+\alpha_{n-2})-2\gamma_{n}+3)\right.\\
&\left.\qquad\quad\,+(\alpha_{n-1}-\alpha_{n})\big[(\alpha_{n-2}-\alpha_{n})\left(\eta(\alpha_{n}+\alpha_{n-1}+\alpha_{n-2})-2\gamma_n\right)+\eta(\beta_n-\beta_{n-1})\big]\right.\\
&\left.\qquad\quad\,+\frac{\alpha_{n-1}-\alpha_{n}}{\alpha_{n-2}-\alpha_{n-1}}\left[\alpha_{n-1}^2+5\alpha_{n-2}^2+2+\beta_{n}\left(\eta(\alpha_{n}+2\alpha_{n-1})-2\gamma_{n}-1\right)\right.\right.\\
&\left.\left.\quad\qquad\qquad\qquad\qquad\quad\,-2\beta_{n-1}+\beta_{n-2}\left(2\gamma_{n}-5-\eta(2\alpha_{n-2}+\alpha_{n-3})\right)\right]\right\}\\
&-2\alpha_{n}+\frac{2\gamma_{n}+3}{\eta},
\end{aligned}
\end{equation}
\begin{equation}\label{diff-bt}
	\begin{aligned}
\beta_{n+2}=&\alpha_{n+1}^2+\alpha_{n+1}\alpha_n+\alpha_{n}^2-\alpha_{n-1}^2-\beta_{n+1}+\beta_n+\beta_{n-1}-2(\gamma_{n}+2)\frac{\alpha_{n+1}}{\eta}-(2\gamma_{n}+5)\frac{\alpha_{n}}{\eta}\\
&+2(\gamma_n-1)\frac{\alpha_{n-1}}{\eta}+\frac{\alpha_{n}}{\eta\beta_{n+1}}\left[\alpha_n^2+1+\beta_{n}\left(2\gamma_{n}-1-\eta(\alpha_{n}+\alpha_{n-1})\right)\right]\\
&+\frac{\beta_{n+1}-\beta_{n}}{\eta\beta_{n+1}(\beta_{n-1}-\beta_{n})}\left\{\beta_{n}\left[\eta(\alpha_{n}^2+\alpha_n\alpha_{n-1}-\beta_{n+1})-2(\gamma_{n}+1)\alpha_{n}-3\alpha_{n-1}\right]\right.\\ &\left.\quad-\beta_{n-1}\left[\eta(\alpha_{n-2}^2+\alpha_{n-2}\alpha_{n-1}-\beta_{n-2})-2(\gamma_{n}-2)\alpha_{n-2}+3\alpha_{n-1}\right]+\alpha_{n-1}^3+\alpha_{n-1}\right\},
	\end{aligned}
\end{equation}
for $n\geq3$, where $\gamma_{n}:=n+\gamma$.
	\end{theorem}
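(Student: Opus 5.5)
The plan is to eliminate $\textbf{p}$ from the system \eqref{diffpn-1}--\eqref{diffal-1} of the preceding Proposition. The key observation is that each of \eqref{diffbt-1} and \eqref{diffal-1} is affine in its $\textbf{p}$-variable. In \eqref{diffbt-1}, $\textbf{p}(n-1)$ enters only through $\frac{2i\textbf{p}(n-1)}{\eta}\left(\frac{\beta_{n-1}}{\beta_n}-1\right)$. In \eqref{diffal-1}, $\textbf{p}(n)$ enters only through $(\alpha_{n-1}-\alpha_n)\,2i\textbf{p}(n)/(\eta\beta_{n+1})$. So I would solve one of them for $2i\textbf{p}$, shift indices as needed, and use \eqref{diffpn-1}, namely $\textbf{p}(n)=\textbf{p}(n-1)-i\alpha_{n-1}$, to match the arguments. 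Substituting into the other equation then gives a relation among the $\alpha$'s and $\beta$'s only.

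For \eqref{diff-al}, I would take \eqref{diffbt-1} with $n$ replaced by $n-1$. This expresses $\beta_n$ in terms of $\beta_{n-1},\beta_{n-2},\beta_{n-3},\alpha_{n-1},\alpha_{n-2},\alpha_{n-3}$ and $\textbf{p}(n-2)$. Solving for $2i\textbf{p}(n-2)$ produces a denominator $\beta_{n-2}-\beta_{n-1}$. That is unpleasant, and \eqref{diff-al} instead carries the denominator $\alpha_{n-2}-\alpha_{n-1}$, which points to \eqref{diffal-1} at the lower index. So the first thing I will try is \eqref{diffal-1} with $n\to n-1$. This gives $\beta_n\,(\text{stuff})$ linear in $\textbf{p}(n-1)$ with coefficient $2i(\alpha_{n-2}-\alpha_{n-1})$. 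I solve it for $2i\textbf{p}(n-1)$, then write $2i\textbf{p}(n)=2i\textbf{p}(n-1)+2\alpha_{n-1}$ and insert the result into \eqref{diffal-1} at level $n$. This yields $\alpha_{n+1}$ with the factor $\frac{\alpha_{n-1}-\alpha_n}{\alpha_{n-2}-\alpha_{n-1}}$, consistent with \eqref{diff-al}. It involves $\alpha_{n-3}$ and $\beta_{n-2}$, as the statement does.

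For \eqref{diff-bt}, I would use \eqref{diffbt-1} with $n\to n+1$. This gives $\beta_{n+2}$ with the $\textbf{p}(n)$-term $\frac{2i\textbf{p}(n)}{\eta}\cdot\frac{\beta_n-\beta_{n+1}}{\beta_{n+1}}$. To eliminate it, I solve \eqref{diffbt-1} at level $n$ for $2i\textbf{p}(n-1)$, which has coefficient $(\beta_{n-1}-\beta_n)/(\eta\beta_n)$. Then I shift by $2i\textbf{p}(n)=2i\textbf{p}(n-1)+2\alpha_{n-1}$ and substitute. The factor $\frac{\beta_{n+1}-\beta_n}{\eta\beta_{n+1}(\beta_{n-1}-\beta_n)}$ in \eqref{diff-bt} is exactly what this produces, which confirms the route. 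The quantity inside the braces should be $\eta\beta_n$ times the $\textbf{p}$-free part of \eqref{diffbt-1}, minus $\eta\beta_n\beta_{n+1}$ and the like, reorganized.

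The main obstacle is purely algebraic bookkeeping: collecting terms so that the result collapses to the compact forms displayed. This is especially true of the $\alpha_{n-1}^3+\alpha_{n-1}$ and $2(\gamma_n-1)\alpha_{n-1}/\eta$ terms, which come from the shift $2\alpha_{n-1}$ combining with the $\alpha_{n-1}(\alpha_{n-1}^2+1)/(\eta\beta_n)$ term, and of the rewriting $\gamma_{n-1}=\gamma_n-1$. I would carry out these simplifications symbolically, for example in Mathematica as in the appendices. As a check, I would verify numerically that the iterates of \eqref{diff-al}--\eqref{diff-bt} agree with those of the $\textbf{p}$-system for the admissible cases $\gamma=0$ and $\gamma=1/2$. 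I would also note the implicit genericity assumptions $\alpha_{n-2}\neq\alpha_{n-1}$, $\beta_{n-1}\neq\beta_n$, and $\beta_n,\beta_{n+1}\neq0$. The range $n\ge3$ is inherited from the Proposition, since the shifted uses at $n-1$ require care at the lowest index and may need $n\ge4$ or a separate check.
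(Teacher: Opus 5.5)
Your proposal is correct and is essentially the paper's own proof. The paper solves \eqref{diffal-1} for $2i\textbf{p}(n)$ and \eqref{diffbt-1} for $2i\textbf{p}(n-1)$, writes each at the adjacent index, and subtracts using $2i\bigl(\textbf{p}(n)-\textbf{p}(n-1)\bigr)=2\alpha_{n-1}$. This is the same elimination as your solve--shift--substitute scheme, and it gives the same denominators $\alpha_{n-2}-\alpha_{n-1}$ and $\beta_{n-1}-\beta_n$. Your caution about the lowest index is well placed, since the paper also applies \eqref{diffal-1} at index $n-1$ without comment. Tracing its derivation shows that \eqref{diffal-1} already holds for $n\ge 2$, so the range $n\ge 3$ in the theorem is justified.
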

	\begin{proof}
Solving for $\textbf{p}(n)$ from \eqref{diffal-1} yields
\begin{align*}
2\mathit{i}\textbf{p}(n)=&-2\gamma_n(\alpha_n+\alpha_{n-1})+\eta(\alpha_n^2+\alpha_n\alpha_{n-1}+\alpha_{n-1}^2-\beta_n+1)\\
&+\frac{1}{\alpha_{n-1}-\alpha_{n}}\left[3\left(\alpha_{n}^2+\alpha_{n-1}^2\right)-\beta_{n+1}\left(2\gamma_n+3-\eta(\alpha_{n+1}+2\alpha_{n})\right)\right.\\
&\left.\qquad\qquad\qquad\;-2\beta_n+\beta_{n-1}\left(2\gamma_n-3-\eta( 2\alpha_{n-1}+\alpha_{n-2})\right)+2\right].
\end{align*}
Replacing $n$ with $n-1$ in this expression, and subtracting the resulting identity from the above one, in view of $\mathit{i}(\textbf{p}(n)-\textbf{p}(n-1))=\alpha_{n-1}$, we come to \eqref{diff-al}.

Solving \eqref{diffbt-1} for $\textbf{p}(n-1)$  produces
\begin{equation*}
\begin{aligned}		 2\mathit{i}\textbf{p}(n-1)&=\frac{1}{\beta_{n}-\beta_{n-1}}\left\{\left[\eta(\alpha_{n}^2+\alpha_{n}\alpha_{n-1}-\beta_{n+1})-2(\gamma_n+1)\alpha_{n}-5\alpha_{n-1}\right]\beta_{n}\right.\\ &\left.+\left[-\eta(\alpha_{n-2}^2+\alpha_{n-1}\alpha_{n-2}-\beta_{n-2})-\alpha_{n-1}+2(\gamma_n-2)\alpha_{n-2}\right]\beta_{n-1}+\alpha_{n-1}^3+\alpha_{n-1}\right\}\\ &+\eta(\alpha_{n-1}^2-\beta_{n}-\beta_{n-1}+1)+2(1-\gamma_n)\alpha_{n-1}.\label{(1.14)}
	\end{aligned}
\end{equation*}
Substituting $n+1$ for $n$ in the above expression and subtracting  the resulting identity from the above one,  in light of $\mathit{i}(\textbf{p}(n-1)-\textbf{p}(n))=-\alpha_{n-1}$, we arrive at \eqref{diff-bt}.
\end{proof}

Rewriting \eqref{diff-al} and \eqref{diff-bt} as follows:
\begin{align*}
\alpha_{n+1}=&G\left(\beta_{n+1},\beta_{n}, \beta_{n-1}, \beta_{n-2}, \alpha_{n},\alpha_{n-1}, \alpha_{n-2}, \alpha_{n-3}\right),\\
\beta_{n+2}=&F\left(\beta_{n+1},\beta_{n},\beta_{n-1},\beta_{n-2}, \alpha_{n+1},\alpha_{n},\alpha_{n-1},\alpha_{n-2}\right), \end{align*}
for $n\geq3$, we find that they can be iterated in $n$, given the initial values $\{\alpha_0,\alpha_1,\alpha_2,\alpha_3\}$ and $\{ \beta_1,\beta_2,\beta_3,\beta_4\}$. When $\gamma=0$ and $\eta$ is a positive zero of the Bessel function $J_0$ , or $\gamma=1/2$ and $\eta=m\pi$ with $m$ denoting a nonzero integer, we use \eqref{dp}-\eqref{al} and \eqref{bt} to calculate these initial values. By substituting them into \eqref{diff-al} and \eqref{diff-bt} for $n=3,4,\dots$ and iterating, we obtain $\alpha_{i}$ and $\beta_i$ for arbitrary $i$. In particular, for $n=3,4,5,6,7,8$, we get the expressions for $\alpha_i (i=4,\dots,9)$ and $\beta_i (i=5,\dots,10)$ in $\zeta$, which together with the initial conditions are presented in Tables \ref{tab1:rec-coeffs} and \ref{tab2:rec-coeffs}. Note that $\beta_{10}$ is not included in the tables.

\begin{corollary}\label{cor0}
When $\gamma=0$ and $\eta$ is a positive zero of the Bessel function $J_0$, the expressions of $\{\alpha_{n}, \beta_n,0\leq n\leq9\}$ are displayed in Table \ref{tab1:rec-coeffs}.
		\begin{table}[H]
			\centering
			 \renewcommand{\arraystretch}{1.5}
			 \setlength{\tabcolsep}{3pt}
\caption{Recurrence coefficients $\alpha_{n}$ and $\beta_{n}$ for $0\leq n\leq9$}
			\label{tab1:rec-coeffs}
\begin{tabular}{|c|c|c|}
\hline
$n$ & $\alpha_n$ & $\beta_n$   \\
\hline
0 & $\frac{1}{\eta}$ & $\beta_0$ \\
\hline
1 & $\frac{3}{\eta}+\frac{-3\eta}{\eta^2-1}$ & $1-\frac{1}{\eta^2}$  \\
\hline
2 & $\frac{5}{\eta}+3\eta\left(\frac{1}{\eta^2-1}-\frac{\eta^2-14}{2W_2}\right)$ & $\frac{-2W_2}{\eta^2(\eta^2-1)^2}$ \\
\hline
3 & $\frac{7}{\eta}+\eta\left(\frac{3(\eta^2-14)}{2W_2}-10\frac{ W_3}{W_4}\right)$ & $\frac{(\eta^2-1)W_4}{\eta^2W_2^2}$ \\
\hline
4 & $\frac{9}{\eta}+10\eta\left(\frac{W_3}{W_4}-\frac{W_5}{2W_6}\right)$ & $-12\frac{W_2W_6}{\eta^2W_4^2}$ \\
\hline
5 & $\frac{11}{\eta}+\eta\left(5\frac{W_5}{W_6}-21\frac{W_{8}}{W_{9}}\right)$ & $\frac{W_4W_9}{\eta^2W_6^2}$ \\
\hline
6 & $\frac{13}{\eta}+21\eta\left(\frac{W_{8}}{W_{9}}-\frac{W_{11}}{2W_{12}}\right)$ & $-30\frac{W_6W_{12}}{\eta^2W_9^2}$ \\
\hline
7 & $\frac{15}{\eta}+6\eta\left(7\frac{W_{11}}{4W_{12}}-6\frac{W_{15}}{W_{16}}\right)$ & $\frac{W_9W_{16}}{\eta^2W_{12}^2}$ \\
\hline
8&$\frac{17}{\eta}+36\eta\left(\frac{W_{15}}{W_{16}}-\frac{W_{19}}{2W_{20}}\right)$&$-56\frac{W_{12}W_{20}}{\eta^2W_{16}^2}$\\
\hline
9&$\frac{19}{\eta}+\eta\left(18\frac{W_{19}}{W_{20}}-55\frac{W_{24}}{W_{25}}\right)$&$\frac{W_{16}W_{25}}{\eta^2W_{20}^2}$\\
\hline
\end{tabular}
\par {\small Note: $W_i=W_i(\eta^2)$, with $W_i(\cdot)$ denoting a monic polynomial of degree $i$. Hence,\\ $W_i(\eta^2)$ is a monic polynomial of degree $2i$ in $\eta$, containing only even powers of $\eta$.\\ The expressions for $W_i(\eta^2), i=2,3,4,5,6,8,9,11,12,15,16$, are given in \\Appendix \ref{symbols}, while the others are omitted for brevity.}
\end{table}
		
\end{corollary}

\begin{proof}
When $\gamma=0$ in \eqref{w2}, the weight function now reads
\begin{align}
w_2(x)=x(1-x^2)^{-\frac{1}{2}}\exp(\mathit{i}\eta x).
\end{align}
Given the initial values
\[P_{-1}(x)=0,\qquad P_0(x)=1,\]
we use the integral expressions \eqref{al-1} and \eqref{bt} to compute $\alpha_n$ and $\beta_n$ respectively, i.e.
\begin{gather}
\alpha_n=\frac{-i}{h_n}\int_{-1}^1xP_n^2(x)w(x)dx,\label{pf0-eq2}\\
\beta_n=\frac{h_n}{h_{n-1}},
\end{gather}
where $h_n$ is given by \eqref{dp} with $m=n$ and $P_n$ satisfies the recurrence relation \eqref{3r}, namely,
\begin{gather}
h_n=\int_{-1}^1 P_n^2(x)w(x)dx,\label{pf0-eq1}\\
P_{n+1}(x)=(x-i\alpha_{n})P_{n}(x)-\beta_{n}P_{n-1}(x),\label{pf0-eq5}
\end{gather}
for $n\geq0$. Using \eqref{pf0-eq2}-\eqref{pf0-eq5}, with the Mathematica code provided in Appendix \ref{A4}, we get the values of $\alpha_i$ for $i=0,1,2,3$ and $\beta_i$ for $i=1,2,3,4$.
With these initial values, by iterating the difference equations \eqref{diffpn-1}-\eqref{diffal-1} for $n=3,4,5,6,7,8$, we generate the values of $\alpha_n$ and $\beta_n$ listed in Table \ref{tab1:rec-coeffs}.
\end{proof}

\begin{remark}
The initial values $\{\alpha_0,\alpha_1,\alpha_2,\alpha_3\}$ and $\{\beta_1,\beta_2,\beta_3,\beta_4\}$ can be derived by hand. The derivation process is provided in Appendix \ref{J0D-int}.
\end{remark}

From the expressions of $\alpha_n$ and $\beta_n$ presented in Table \ref{tab1:rec-coeffs}, we propose the following conjecture.
\begin{conjecture} The recurrence coefficients for the monic orthogonal polynomials associated with the weight function $x(1-x^2)^{-1/2}\exp(i\eta x), x\in[-1,1]$, with $\zeta$ being a positive zero of $J_{0}$, have the following symbolic forms:
\begin{align*}
\alpha_{2k}=&\frac{4k+1}{\eta}+k(2k+1)\eta\left(\frac{W_{k^2-1}(\eta^2)}{W_{k^2}(\eta^2)}-\frac{W_{k(k+1)-1}(\eta^2)}{2W_{k(k+1)}(\eta^2)}\right),\\ \alpha_{2k+1}=&\frac{4k+3}{\eta}+\eta\left(k(2k+1)\frac{W_{k(k+1)-1}(\eta^2)}{2W_{k(k+1)}(\eta^2)}-(k+1)(2k+3)\frac{W_{k(k+2)}(\eta^2)}{W_{(k+1)^2}(\eta^2)}\right),
\end{align*}
for $k\geq2$, and
\begin{align*}
\beta_{2k} =& -2k(2k-1)\frac{W_{k(k-1)}(\eta^2)\cdot W_{k(k+1)}(\eta^2)}{\eta^2 \left(W_{k^2}(\eta^2)\right)^2},& \beta_{2k+1} =& \frac{W_{k^2}(\eta^2)\cdot W_{(k+1)^2}(\eta^2)}{\eta^2 \left(W_{k(k+1)}(\eta^2)\right)^2},
\end{align*}
for $k\geq1$. Here $W_j(\cdot)$ is a monic polynomial of degree $j$ for $j\geq0$, with $W_0(\eta^2):=1$ and $W_1(\eta^2):=\eta^2-1$. Note that $W_i(\eta^2)$ for $i=2,3,4,5,6,8,9,11,12,15,16$ are listed in Appendix \ref{symbols}.
\end{conjecture}

When $\gamma=\frac{1}{2}$ and $\eta=m\pi$ where $m$ is an arbitrary nonzero integer, the weight function now reads
\begin{align}
w_2(x)=x\exp(\mathit{i}\eta x).
\end{align}	
Via an argument similar to that in the proof of Corollary \ref{cor0}, we generate the values of $\alpha_n$ and $\beta_n$ listed in Table \ref{tab2:rec-coeffs}. It should be pointed out that, the initial values can be derived using Mathematica code similar to that in Appendix \ref{A4} (see the comments therein for details).
\begin{corollary}
When $\gamma=\frac{1}{2}$ and $\eta=m\pi$ where $m$ is an arbitrary nonzero integer, the values of $\{\alpha_{n},\beta_n, 0\leq n\leq 9\}$ are given in Table \ref{tab2:rec-coeffs}.	
		\begin{table}[H]
			\centering
			 \renewcommand{\arraystretch}{1.5}
			 \setlength{\tabcolsep}{3pt}
			\caption{Recurrence coefficients $\alpha_{n}$ and $\beta_{n}$ for $0\leq n \le 9$}
			\label{tab2:rec-coeffs}
			 \begin{tabular}{|c|c|c|}
\hline
$n$ & $\alpha_n$ & $\beta_n$   \\
\hline
0 & $\frac{2}{\eta}$ & $\beta_0$  \\
\hline
1 & $\frac{4}{\eta}-\frac{4\eta}{\eta^2-2}$ & $1-\frac{2}{\eta^2}$ \\
\hline
2&$\frac{6}{\eta}+2\eta\left(\frac{2}{\eta^2-2}-\frac{\eta^2-24}{G_2}\right)$ & $\frac{-4G_2}{\eta^2(\eta^2-2)^2}$ \\
\hline
3&$\frac{8}{\eta}+2\eta\left(\frac{\eta^2-24}{G_2}-6\frac{G_3}{G_4}\right)$&$\frac{(\eta^2-2)G_4}{\eta^2G_2^2}$ \\
\hline	
4&$\frac{10}{\eta}+6\eta\left(2\frac{G_3}{G_4}-\frac{G_5}{G_6}\right)$ & $-16\frac{G_2G_6}{\eta^2G_4^2}$ \\
\hline
5&$\frac{12}{\eta}+6\eta\left(\frac{G_5}{G_6}-\frac{4G_8}{G_9}\right)$ & $\frac{G_4G_9}{\eta^2G_6^2}$ \\
\hline
6&$\frac{14}{\eta}+12\eta\left(\frac{2G_8}{G_9}-\frac{G_{11}}{G_{12}}\right)$ & $-36\frac{G_6G_{12}}{\eta^2G_9^2}$ \\
\hline
7&$\frac{16}{\eta}+2\eta\left(\frac{6G_{11}}{G_{12}}-20\frac{G_{15}}{G_{16}}\right)$ & $\frac{G_9G_{16}}{\eta^2G_{12}^2}$ \\\hline
8&$\frac{18}{\eta}+20\eta\left(\frac{2G_{15}}{G_{16}}-\frac{G_{19}}{G_{20}}\right)$&$-64\frac{G_{12}G_{20}}{\eta^2G_{16}^2}$\\
\hline
9&$\frac{20}{\eta}+20\eta\left(\frac{G_{19}}{G_{20}}-3\frac{G_{24}}{G_{25}}\right)$&$\frac{G_{16}G_{25}}{\eta^2G_{20}^2}$\\
\hline
\end{tabular}
\par {\small Note: $G_i=G_i(\eta^2)$, with $G_i(\cdot)$ denoting a monic polynomial of degree $i$. Hence, \\ $G_i(\eta^2)$ is a monic polynomial of degree $2i$ in $\eta$, containing only even powers of $\eta$.\\ The expressions for $G_i(\eta^2), i=2,3,4,5,6,8,9,11,12,15,16,$ are given in\\ Appendix \ref{symbols}, while the others are omitted for brevity.}
\end{table}
\end{corollary}

According to the above expressions of $\alpha_n$ and $\beta_n$, we propose the following conjecture.
\begin{conjecture} The recurrence coefficients for the monic orthogonal polynomials associated with the weight function $x\exp(im\pi x), x\in[-1,1]$, with $m$ being an arbitrary nonzero integer, have the following symbolic forms:
\begin{align*}
\alpha_{2k}=&\frac{2(2k+1)}{\eta}+k(k+1)\eta\left(\frac{2G_{k^2-1}(\eta^2)}{G_{k^2}(\eta^2)}-\frac{G_{k(k+1)-1}(\eta^2)}{G_{k(k+1)}(\eta^2)}\right),\\ \alpha_{2k+1}=&\frac{2(2k+2)}{\eta}+\eta\left(k(k+1)\frac{G_{k(k+1)-1}(\eta^2)}{G_{k(k+1)}(\eta^2)}-2(k+1)(k+2)\frac{G_{k(k+2)}(\eta^2)}{G_{(k+1)^2}(\eta^2)}\right),
\end{align*}
for $k\geq2$, and
\begin{align*}
\beta_{2k} =& -(2k)^2\frac{G_{k(k-1)}(\eta^2)\cdot G_{k(k+1)}(\eta^2)}{\eta^2 \left(G_{k^2}(\eta^2)\right)^2},& \beta_{2k+1} =& \frac{G_{k^2}(\eta^2)\cdot G_{(k+1)^2}(\eta^2)}{\eta^2 \left(G_{k(k+1)}(\eta^2)\right)^2},
\end{align*}
for $k\geq1$. Here $G_j(\cdot)$ is a monic polynomial of degree $j$ for $j\geq0$, with $G_0(\eta^2):=1$ and $G_1(\eta^2):=\eta^2-2$. Note that $G_i(\eta^2)$ for $i=2,3,4,5,6,8,9,11,12,15,16$ are listed in Appendix \ref{symbols}.
\end{conjecture}

\begin{remark}
Our conjecture is consistent with the one given in \cite[Conjecture 3.1]{MC05-2}.
\end{remark}

\begin{remark}
The values of $\{\alpha_0,\alpha_1,\alpha_2,\alpha_3,\alpha_4;\beta_1,\beta_2,\beta_3,\beta_4\}$ given in Table \ref{tab2:rec-coeffs} can be calculated by hand. In fact, for $\eta=m\pi$, using Euler's formula $e^{i\eta x}=\cos (\eta x)+i \sin (\eta x)$ and through integration by parts, we obtain the following integral identities:
\begin{align*}
\int_{-1}^1 x^k e^{i\eta x}dx=&\frac{-2i(-1)^m}{\eta^k}\tau_k,\qquad 1\leq k\leq 10,
\end{align*}
where $\tau_1=1, \tau_2=2i,\tau_3=\eta^2-6,\tau_4=4i\tau_3, \tau_5=\eta^4-20 \eta^2+120,\tau_6=6i\tau_5,\tau_7=\eta ^6-42
   \eta ^4+840 \eta
   ^2-5040,\tau_8=8i\tau_7, \tau_9=\eta ^8-72 \eta ^6+3024
   \eta ^4-60480 \eta
   ^2+362880, \tau_{10}=10i\tau_9$.
Using \eqref{pf0-eq2}-\eqref{pf0-eq5}, with the aid of the above integral formulas, we obtain the desired values.
\end{remark}

\begin{remark}

When $\gamma=\frac{1}{2}$ and $\eta=100\pi$, with the aid of the initial values $\{\alpha_0,\alpha_1,\alpha_2,\alpha_3\}$ and $\{\beta_1,\beta_2,\beta_3,\beta_4\}$ given in Table \ref{tab2:rec-coeffs}, we iterate the coupled difference equations \eqref{diff-al}-\eqref{diff-bt} for $3\leq n\leq 30$. The Mathematica code is provided in Appendix \ref{A6}, and the obtained values of $\alpha_n$ and $\beta_n$ for $0\leq n \leq 29$ are the same as those given in \cite[Table 2]{MC05-2} which were computed using a combination of the Chebyshev method and the Stieltjes-Gautschi procedure. In addition, we get
\begin{align*}
\alpha_{30}=&1.269928185955803899,&
\beta_{30}=&-0.00428212384856920060.
\end{align*}

For $\gamma=1/2$ and $\eta=\pi$, using the Mathematica code similar to that given in Appendix \ref{A6} (see the comments therein), we compute $\alpha_n$ and $\beta_n$ for $0\leq n \leq 29$. The obtained values are identical to those in \cite[Table 2]{MC05-2}, except that the last digit of $\alpha_{18}$ is 3 (not 2), the last digit of $\beta_8$ is 8 (not 7), and the last two digits of $\beta_{22}$ are 70 (not 69). In addition, we get
\begin{align*}
\alpha_{30}=&-0.08657434839001662,&
\beta_{30}=&0.2103637730884711.
\end{align*}
\end{remark}

\begin{remark}
Since $\eta$ is real in both cases $\gamma=0$ and $\gamma=1/2$,  we know that the initial values $\{\alpha_i,\beta_i,i=1,2,3,4\}$ given in Table \ref{tab1:rec-coeffs} and Table \ref{tab2:rec-coeffs} are all rational numbers. Noting that the coefficients that appear in the difference equations \eqref{diffpn-1}-\eqref{diffal-1} are all rational, we conclude that the values of $\alpha_n (n\geq0)$ and $\beta_n (n\geq1)$ are rational.
\end{remark}

\section*{Acknowledgements}
This work was supported by National Natural Science Foundation of China under grant numbers 12101343 and 12371257, and by Shandong Provincial Natural Science Foundation with project number ZR2021QA061.

\bibliographystyle{}

\begin{thebibliography}{10}

\bibitem{BCH}
E. Basor, Y. Chen and N. Haq, Asymptotics of determinants of Hankel matrices via non-linear difference equations, J. Approx. Theory {\bf 198} (2015), 63-110.

\bibitem{CI}
Y. Chen and M. Ismail, Jacobi polynomials from compatibility conditions, Proc. Am. Math. Soc. {\bf 133} (2005), 465-472.

\bibitem{CI10}
Y. Chen and A. Its, Painlev\'{e} III and a singular linear statistics in Hermitian random matrix ensembles, I, J. Approx. Theory {\bf 162} (2010), 270-297.


\bibitem{CM12}
Y. Chen and M. McKay, Coulomb fluid, Painlev\'{e} transcendents and the information theory of MIMO systems, IEEE Trans. Inf. Theory {\bf 58} (2012), 4594-4634.

\bibitem{CYC}
R. Chen, D. Yu and J. Chen, Computation of oscillatory integrals with an exponential kernel
and Jacobi-type singularities, J. Math. Anal. Appl. {\bf 494} (2021), 124448 (11pp).

\bibitem{CJ21}
P. A. Clarkson and K. Jordaan, Generalised Airy polynomials, J. Phys. A: Math. Theor. {\bf 54} (2021), 185202 (28pp).

\bibitem{GR}
I. S. Gradshteyn and I. M. Ryzhik, Table of Integrals, Series, and Products, Seventh edition. Edited by Daniel Zwillinger and Victor Moll. Academic Press, Elsevier, 2015.

\bibitem{Ismail}
M. Ismail, Classical and Quantum Orthogonal Polynomials in One Variable, Encyclopedia of Mathematics and its Applications 98, Cambridge University Press, Cambridge, 2005.

\bibitem{KX}
H. Kang and Q. Xu, Quadrature formulae of many highly oscillatory Frourier-type integrals with algebraic or logarithmic singularities and their error analysis, Appl. Math. Comput. {\bf 442} (2023), 127758 (15pp).

\bibitem{KHM}
D. K. Kurto\v{g}lu, A. I. Has\c{c}elik and G. V. Milovanovi\'c, A method for efficient computation of integrals with oscillatory and singular integrand, Numer. Algorithms {\bf 85} (2020), 1155-1173.

\bibitem{Lebedev}
N. N. Lebedev, Special Functions and Their Applications, Revised English Edition Translated and Edited by Richard A. Silverman. Dover Publications, New York, 1972.


\bibitem{LyuLyu25}
S. Lyu and Y. Lyu, Ladder Operators for Laguerre-type and Jacobi-type Orthogonal Polynomials, J. Phys. A: Math. Theor. {\bf 59} (2026), 115201 (30pp).

\bibitem{MC05-1}
G. V. Milovanovi\'c and A. S. Cvetkovi\'c,  Orthogonal polynomials related to the oscillatory Chebyshev weight function, {Bull. Cl. Sci. Math. Nat. Sci. Math. 30 (2005)}, {47-60}.


\bibitem{MC05-2}
G V. Milovanovi\'c and A. S. Cvetkovi\'c, Orthogonal polynomials and Gaussian quadrature rules related to oscillatory weight functions, J. Comput. Appl. Math. 179 (2005), 263-287.


\bibitem{MCM}
G. V. Milovanovi\'{c}, A. S. Cvetkovi\'{c} and Z. M. Marjanovic, Orthogonal polynomials for the oscillatory-Gegenbauer weight, Publ. Inst. Math. (Belgr.) (N.S.) {\bf 84} (2008), 49-60.


\bibitem{MCS}
G. V. Milovanovi\'{c}, A. S. Cvetkovi\'{c} and M. P. Stani\'{c}, Orthogonal polynomials for modified Gegenbauer weight and corresponding quadratures, Appl. Math. Lett. {\bf 22} (2009), 1189-1194.


\bibitem{MSM}
G. V. Milovanovi\'{c}, M. P. Stani\'{c} and T. V. Tomovi\'{c} Mladenovi\'{c}, Gaussian type quadrature rules related to the oscillatory modification of the generalized Laguerre weight functions, J. Comput. Appl. Math. {\bf 437} (2024), 115476 (8pp).

\bibitem{MC17}
C. Min and Y. Chen, Gap probability distribution of the Jacobi unitary ensemble: an elementary treatment, from finite $n$ to double scaling, Stud. Appl. Math. {\bf 140} (2017), 202-220.

\bibitem{MC20}
C. Min and Y. Chen, Painlev\'{e} V and the Hankel determinant for a singularly perturbed Jacobi weight, Nucl. Phys. B {\bf 961} (2020), 115221 (25pp).

\bibitem{MC21}
C. Min and Y. Chen, Differential, difference and asymptotic relations for Pollaczek-Jacobi type orthogonal polynomials and their Hankel determinants, Stud. Appl. Math. {\bf 147} (2021), 390-416.


\bibitem{MF25}
C. Min and P. Fang, Generalized Airy polynomials, Hankel determinants and asymptotics, Physica D {\bf 473} (2025), 134560 (9pp).


\bibitem{MW26-1}
C. Min and X. Wu, Asymptotics of the Hankel determinant and orthogonal polynomials arising from the information theory of MIMO systems, arXiv: 2510.06739.


\bibitem{MW26-2}
C. Min and X. Wu, Orthogonal polynomials for the singularly perturbed Laguerre weight, Hankel determinants and asymptotics, arXiv: 2511.09362.


\bibitem{ML25}
X. Mu and S. Lyu, Hankel determinants for a deformed Laguerre weight with multiple variables and generalized Painlev\'{e} V equation, J. Math. Phys. {\bf 66} (2025), 073506 (21pp).


\bibitem{SC}
M. P. Stani\'{c} and A. S. Cvetkovi\'{c}, Orthogonal polynomials with respect to modified
Jacobi weight and corresponding quadrature rules
of Gaussian type, Numer. Math. Theor. Meth. Appl. {\bf 4} (2011), 478-488.


\end{thebibliography}

{\small \begin{appendices}

\section{Derivation of the initial values $\{\alpha_0,\alpha_1\}$ and $\{ \beta_1,\beta_2\}$ given by \eqref{al0bt1}.}\label{J0D-int-1}
We need the following lemma.
\begin{lemma}\label{w1L}
Supposing $\zeta$ is a positive zero of the Bessel function $J_{\lambda-1}$, we have the following six integral identities:
\begin{align}
\int_{-1}^{1}(1-x^2)^{\lambda-\frac{1}{2}}\exp(\mathit{i}\zeta x)\, dx=&A\zeta^{-\lambda}J_{\lambda}(\zeta),\label{Bs-1-1}\\
\int_{-1}^{1}x(1-x^2)^{\lambda-\frac{1}{2}}\exp(\mathit{i}\zeta x)\, dx=&2i A\lambda\zeta^{-\lambda-1}J_{\lambda}(\zeta),\label{Bs-1-2}\\
\int_{-1}^{1}x^{2}(1-x^2)^{\lambda-\frac{1}{2}}\exp(\mathit{i}\zeta x)\, dx=&A\zeta^{-\lambda}J_{\lambda}(\zeta)\left[1-\frac{2\lambda(2\lambda+1)}{\zeta^2}\right],\label{Bs-1-3}\\
\int_{-1}^{1}x^{3}(1-x^2)^{\lambda-\frac{1}{2}}\exp(\mathit{i}\zeta x)\, dx=&iA\zeta^{-\lambda-1}J_{\lambda}(\zeta)\left[4\lambda+1-\frac{2\lambda(2\lambda+1)(2\lambda+2)}{\zeta^2}\right],\label{Bs-1-4}\\
\int_{-1}^{1}x^{4}(1-x^2)^{\lambda-\frac{1}{2}}\exp(\mathit{i}\zeta x)\, dx=&A\zeta^{-\lambda}J_{\lambda}(\zeta)\left[1-\frac{3(2\lambda+1)^2}{\zeta^2}+\frac{2\lambda(2\lambda+1)(2\lambda+2)(2\lambda+3)}{\zeta^4}\right],\label{Bs-1-5}
\end{align}
where $A:=\sqrt{\pi}2^{\lambda}\Gamma\left(\lambda+\frac{1}{2}\right)$.
\end{lemma}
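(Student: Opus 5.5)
The plan is to start from Poisson's integral representation of the Bessel function, valid for $\lambda>-\frac12$ and every real $\zeta\neq 0$:
\[
J_{\lambda}(\zeta)=\frac{(\zeta/2)^{\lambda}}{\sqrt{\pi}\,\Gamma\left(\lambda+\frac12\right)}\int_{-1}^{1}(1-x^2)^{\lambda-\frac12}e^{i\zeta x}\,dx .
\]
Rearranging gives \eqref{Bs-1-1} at once, with $A=\sqrt{\pi}\,2^{\lambda}\Gamma(\lambda+\frac12)$. This identity holds for all $\zeta$, not only at zeros of $J_{\lambda-1}$, so it can be differentiated. Put
\[
I(\zeta):=\int_{-1}^{1}(1-x^2)^{\lambda-\frac12}e^{i\zeta x}\,dx=A f(\zeta),\qquad f(\zeta):=\zeta^{-\lambda}J_{\lambda}(\zeta).
\]
Since $\lambda>-\frac12$, the integrand times any power of $x$ is integrable, so we may differentiate under the integral sign. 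This gives
\[
\int_{-1}^1 x^k(1-x^2)^{\lambda-\frac12}e^{i\zeta x}\,dx=(-i)^k A f^{(k)}(\zeta),\qquad k=1,\dots,4 .
\]
So the lemma reduces to computing $f^{(k)}(\zeta)$ for $k\le 4$ at a zero of $J_{\lambda-1}$.

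For this I will use two facts. The first is the standard derivative formula $(z^{-\nu}J_\nu)'=-z^{-\nu}J_{\nu+1}$ together with the recurrence $J_{\lambda+1}(z)=\frac{2\lambda}{z}J_\lambda(z)-J_{\lambda-1}(z)$. Together they give
\[
f'(\zeta)=-\frac{2\lambda}{\zeta}f(\zeta)+\zeta^{-\lambda}J_{\lambda-1}(\zeta).
\]
At a zero of $J_{\lambda-1}$ the last term vanishes, so $f'=-\frac{2\lambda}{\zeta}f$, which yields \eqref{Bs-1-2}. The second fact is that $f$ satisfies the second-order ODE
\[
f''+\frac{2\lambda+1}{\zeta}f'+f=0,
\]
which follows from Bessel's equation (or directly from the integral by integrating by parts). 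Differentiating this ODE once and twice expresses $f'''$ and $f''''$ in terms of $f'$ and $f$ with rational coefficients in $\zeta$. Only after obtaining these function-level identities will I evaluate at $\zeta$ and substitute $f'=-\frac{2\lambda}{\zeta}f$.

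For example, $f''=\left(\frac{2\lambda(2\lambda+1)}{\zeta^2}-1\right)f$, and multiplying by $(-i)^2=-1$ reproduces \eqref{Bs-1-3}. The same procedure gives \eqref{Bs-1-4} and \eqref{Bs-1-5} after simplification, and I will check that the coefficients match, for instance $4\lambda+1$ and $3(2\lambda+1)^2$.

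The main point requiring care is the order of operations. Differentiation must be carried out on identities valid for all $\zeta$, namely the ODE, and never on the relation $f'=-2\lambda f/\zeta$, which holds only at isolated points. Beyond that, the work is routine algebra, which I would double-check with Mathematica. The argument also works for negative zeros, since nothing uses $\zeta>0$.
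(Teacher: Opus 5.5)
Your proposal is correct and is essentially the paper's proof. The paper also differentiates the integral representation $\int_{-1}^1(1-x^2)^{\nu-\frac12}e^{izx}\,dx=Az^{-\nu}J_\nu(z)$ repeatedly in $z$, and reduces each derivative, using Bessel relations valid for all $z$, to combinations of $z^{-\nu}J_\nu$ and $z^{-\nu}J_{\nu+1}$; this is your basis $\{f,-f'\}$. Only at the end does it substitute $J_{\lambda+1}(\zeta)=\frac{2\lambda}{\zeta}J_\lambda(\zeta)$, which is your $f'=-\frac{2\lambda}{\zeta}f$. Your use of the ODE $f''+\frac{2\lambda+1}{\zeta}f'+f=0$ is just a tidier packaging of the same recurrences, and the resulting coefficients ($4\lambda+1$ and $3(2\lambda+1)^2$) do check out.
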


\begin{proof}
We first recall the following properties of the Bessel function \cite[(5.3.6)-(5.3.7)]{Lebedev}:
\begin{align}
\frac{2\nu}{z}J_{\nu}(z)=&J_{\nu-1}(z)+J_{\nu+1}(z),\label{Bs1}\\
2J_{\nu}'(z)=&J_{\nu-1}(z)-J_{\nu+1}(z).\label{Bs2}
\end{align}
Solving $J_{\nu+1}(z)$ from \eqref{Bs1} and \eqref{Bs2} yields
\begin{align}\label{Bs3}
J_{\nu}'(z)=\frac{\nu}{z}J_{\nu}(z)-J_{\nu+1}(z).
\end{align}
From \eqref{Bs2} follows
\begin{align}
J_{\nu+1}'(z)=&\frac{1}{2}\left(J_{\nu}(z)-J_{\nu+2}(z)\right)\nonumber\\
=&J_{\nu}(z)-\frac{\nu+1}{z}J_{\nu+1}(z),\label{Bs4}
\end{align}
where the second identity is due to the following relation which comes from \eqref{Bs1}:
\[J_{\nu+2}(z)=\frac{2(\nu+1)}{z}J_{\nu+1}(z)-J_{\nu}(z).\]
Noting that $J_{\lambda-1}(\zeta)=0$, we get from \eqref{Bs1} that
\begin{align}\label{Bs5}
\frac{2\lambda}{\zeta}J_{\lambda}(\zeta)=J_{\lambda+1}(\zeta).
\end{align}

To continue, from \cite[eq. 10, p. 912]{GR}, we know that
\begin{align}\label{h0-1}
\int_{-1}^1(1-x^2)^{\nu-\frac{1}{2}}\exp(iz x)dx=Az^{-\nu}J_{\nu}(z),
\end{align}
which gives \eqref{Bs-1-1}.
Differentiating it with respect to $z$, with the help of \eqref{Bs3}, we find
\begin{align}
i\int_{-1}^{1}x(1-x^2)^{\nu-\frac{1}{2}}\exp(\mathit{i}z x)\, dx
=&-Az^{-\nu}J_{\nu+1}(z),\label{x1-2}
\end{align}
which, in light of \eqref{Bs5}, leads us to \eqref{Bs-1-2}.
By a subsequent differentiation of \eqref{x1-2} with respect to $z$, with the aid of \eqref{Bs4}, we obtain
\begin{align}
-\int_{-1}^{1}x^2(1-x^2)^{\nu-\frac{1}{2}}\exp(\mathit{i}z x)\, dx
=&- Az^{-\nu}\left(J_{\nu}(z)-\frac{2\nu+1}{z} J_{\nu+1}(z)\right),\label{Bs6}
\end{align}
which, in view of \eqref{Bs5}, gives us \eqref{Bs-1-3}.

Taking the derivative of \eqref{Bs6} with respect to $z$ and applying \eqref{Bs3}-\eqref{Bs4} results in
\begin{align}
i\int_{-1}^{1}x^3(1-x^2)^{\nu-\frac{1}{2}}\exp(\mathit{i}z x)\, dx
=& Az^{-\nu}\left[-\frac{2\nu+1}{z}J_{\nu}(z)+\left(\frac{(2\nu+1)(2\nu+2)}{z^2}-1\right)J_{\nu+1}(z)\right].\label{Bs7}
\end{align}
According to \eqref{Bs5}, we come to \eqref{Bs-1-4}. Finally, differentiating \eqref{Bs7} once more, we deduce
\begin{align*}
-\int_{-1}^{1}x^4(1-x^2)^{\nu-\frac{1}{2}}\exp(\mathit{i}z x)\, dx
=& Az^{-\nu}\left[\left(\frac{(2\nu+1)(2\nu+3)}{z^2}-1\right)J_{\nu}(z)\right.\\
&\left.\qquad\qquad+\left(-\frac{(2\nu+2)(2\nu+3)}{z^3}+\frac{2}{z}\right)(2\nu+1)J_{\nu+1}(z)\right],
\end{align*}
which, by virtue of \eqref{Bs5} again, leads to \eqref{Bs-1-5}.
\end{proof}

Recall \eqref{al-1} and \eqref{bt}:
\begin{gather}
\alpha_n=\frac{-i}{h_n}\int_{-1}^1xP_n^2(x)(1-x^2)^{\lambda-\frac{1}{2}}\exp(\mathit{i}\zeta x)dx, \qquad\qquad n\geq0,\label{al-int}\\
\beta_n=\frac{h_n}{h_{n-1}}, \qquad\qquad n\geq1,\label{bt-h}
\end{gather}
where $h_n$ is given by \eqref{dp} with $m=n$, namely
\begin{gather}\label{h-int}
h_n=\int_{-1}^1P_n^2(x)(1-x^2)^{\lambda-\frac{1}{2}}\exp(\mathit{i}\zeta x)dx,\qquad\qquad n\geq0,
\end{gather}
and $P_n$ can be computed using the three-term recurrence relation, i.e.
\begin{gather}
P_{n+1}(x)=(x-i\alpha_{n})P_{n}(x)-\beta_{n}P_{n-1}(x),\qquad\qquad n\geq0,\label{P-3r}
\end{gather}
with $P_{-1}(x):=0$ and $P_0(x):=1$.

With $n=0$ in \eqref{h-int}, \eqref{al-int} and \eqref{P-3r}, applying \eqref{Bs-1-1} and \eqref{Bs-1-2}, we have
\begin{align*}
h_0=&\int_{-1}^1(1-x^2)^{\lambda-\frac{1}{2}}\exp(\mathit{i}\zeta x)dx=A\zeta^{-\lambda}J_{\lambda}(\zeta),\\
\alpha_0=&\frac{-i}{h_0}\int_{-1}^1x(1-x^2)^{\lambda-\frac{1}{2}}\exp(\mathit{i}\zeta x)dx=\frac{2\lambda}{\zeta},\\
P_1(x)=&(x-i\alpha_{0})P_{0}(x)-\beta_{0}P_{-1}(x)=x-\frac{2 i\lambda}{\zeta}.
\end{align*}
Setting $n=1$ in \eqref{h-int}, \eqref{al-int}-\eqref{bt-h} and \eqref{P-3r}, by applying \eqref{Bs-1-1}-\eqref{Bs-1-4}, we get
\begin{align*}
h_1=&\int_{-1}^1P_1^2(x)(1-x^2)^{\lambda-\frac{1}{2}}\exp(\mathit{i}\zeta x)dx=A\zeta^{-\lambda}J_{\lambda}(\zeta)\left(1-\frac{2\lambda}{\zeta^2}\right),\\
\alpha_1=&\frac{-i}{h_1}\int_{-1}^1xP_1^2(x)(1-x^2)^{\lambda-\frac{1}{2}}\exp(\mathit{i}\zeta x)dx=\frac{\zeta^2-4\lambda(\lambda+1)}{\zeta(\zeta^2-2\lambda)},\\
\beta_1=&\frac{h_1}{h_0}=1-\frac{2\lambda}{\zeta^2},\\
P_{2}(x)=&(x-i\alpha_{1})P_{1}(x)-\beta_{1}P_{0}(x)=x^2-\frac{i (2\lambda +1) \left(\zeta^2-4 \lambda \right)}{\zeta(\zeta^2-2 \lambda) } x +\frac{4 \lambda ^2 (2 \lambda+1)}{\zeta^2(\zeta^2-2\lambda)}-1.
\end{align*}
Substituting $P_2(x)$ into \eqref{h-int} with $n=2$, with the aid of \eqref{Bs-1-1}-\eqref{Bs-1-5}, we obtain
\[h_2=\int_{-1}^1P_2^2(x)(1-x^2)^{\lambda-\frac{1}{2}}\exp(\mathit{i}\zeta x)dx=-\frac{2(2 \lambda +1) A \zeta ^{-\lambda}J_{\lambda}(\zeta)
   \left(\zeta ^4-\zeta ^2
   \lambda  (2 \lambda +5)+4
   \lambda ^2\right)}{\zeta^4(\zeta
   ^2-2 \lambda) },\]
so that
\[\beta_2=\frac{h_2}{h_1}=-\frac{2 (2 \lambda +1)\left(\zeta ^4-\zeta ^2\lambda  (2 \lambda +5)+4\lambda^2\right)}{\zeta^2\left(\zeta ^2-2\lambda \right)^2}.\]

\section{Mathematica code to compute $\{\alpha_0,\alpha_1\}$ and $\{ \beta_1,\beta_2\}$ given by \eqref{al0bt1}.}\label{IV1-M}
\begin{lstlisting}[language=Mathematica]
(*initialization*)
P[0, x_] = 1;
P[-1, x_] = 0;
w[x_] = (1 - x^2)^(\[Lambda] - 1/2)*Exp[I*\[Zeta] x];
$Assumptions = Element[\[Lambda], Rationals] && \[Lambda] > -1/2;

(*definitions of h[0] and \[Alpha][0]*)
h[0] = FullSimplify[Integrate[w[x], {x, -1, 1}]];
\[Alpha][0] =
  FullSimplify[
   FullSimplify[-I/h[0]*Integrate[x*w[x], {x, -1, 1}]] /.
    BesselJ[\[Lambda] +1, \[Zeta]] -> (2 \[Lambda]/\[Zeta]) BesselJ[\[Lambda], \[Zeta]]];

(*calculations of \[Alpha][n] and \[Beta][n] for n=1,2*)
results =
  Table[If[n > 0,
    P[n,x_] = (x - I*\[Alpha][n - 1])*P[n - 1, x] - \[Beta][n - 1]* P[n - 2, x];
    h[n] =
     FullSimplify[
      FullSimplify[Integrate[P[n, x]^2*w[x], {x, -1, 1}]] /. BesselJ[\[Lambda]+1, \[Zeta]] -> (2 \[Lambda]/\[Zeta]) BesselJ[\[Lambda], \[Zeta]]];
    \[Beta][n] = Map[Factor, Apart[FullSimplify[h[n]/h[n - 1]], \[Zeta]]];
    \[Alpha][n] =
     Map[Factor,
      Apart[
       FullSimplify[
        FullSimplify[-I/h[n]*
        Integrate[x*P[n, x]^2*w[x], {x, -1, 1}]] /. {BesselJ[2+\[Lambda], \[Zeta]] ->2 (\[Lambda] + 1)/\[Zeta] * BesselJ[\[Lambda] + 1,\[Zeta]] -
             BesselJ[\[Lambda], \[Zeta]]} /. {BesselJ[\[Lambda] + 1, \[Zeta]] -> (2 \[Lambda]/\[Zeta]) BesselJ[\[Lambda], \[Zeta]]}], \[Zeta]]]];
   {n, \[Alpha][n], \[Beta][n]}, {n, 1, 2}];

(*output the values of \[Alpha][n] and \[Beta][n] for n=0,1,2 in tabular form*)
TableForm[Join[{{0, \[Alpha][0], \[Beta][0]}}, results],
 TableHeadings -> {None, {"n", "\[Alpha][n]", "\[Beta][n]"}}]
 \end{lstlisting}

\section{Explicit forms of $\{X_i, Y_i, W_i,G_i\}$ from Tables \ref{Table-1}-\ref{tab2:rec-coeffs}.}\label{symbols}
(1) Explicit forms of $X_{i}(\zeta^2)$ from Table \ref{Table-1} for $i=2,3,4,5,6,8,9$.
\begin{align*}
X_{2}:=&\zeta ^4-\zeta ^2 \lambda  (2 \lambda +5)+4 \lambda ^2,\\
2X_{3}:=&2 \zeta ^6-\zeta ^4 \left(2 \lambda ^2+11 \lambda +3\right)+2 \zeta ^2 \lambda  \left(2 \lambda ^2+17 \lambda +12\right)-8 \lambda ^2 \left(2 \lambda ^3+7 \lambda ^2+11 \lambda +6\right),\\
4X_{4}:=&4 \zeta ^8-\zeta ^6 \left[4 \lambda  (\lambda +9)+21\right]+12 \zeta ^4 \lambda  [4 \lambda  (\lambda +5)+15]-16 \zeta ^2 \lambda ^2 (\lambda +1) [4 \lambda  (\lambda +4)+27]+192 \lambda ^3 (\lambda +1),\\
8X_{5}:=&8 \zeta ^{10}-5 \zeta ^8 \left(16 \lambda ^2+54 \lambda +33\right)+2 \zeta ^6 \left(68 \lambda ^4+548 \lambda ^3+1233 \lambda ^2+774 \lambda +36\right)\\
&-16 \zeta ^4 \lambda  (\lambda +1) \left(4 \lambda ^4+42 \lambda ^3+188 \lambda ^2+279 \lambda +54\right)+192 \zeta ^2 \lambda ^2 (\lambda +1) (\lambda +2) (14 \lambda +9)\\
&-384 \lambda ^3 (\lambda +1)^2 \left(2 \lambda ^3+7 \lambda ^2+12 \lambda +12\right),\\
8X_{6}:=&8 \zeta ^{12}-24 \zeta ^{10} \left(2 \lambda ^2+9 \lambda +6\right)+3 \zeta ^8 \left(24 \lambda ^4+228 \lambda ^3+674 \lambda ^2+527 \lambda +66\right)\\
&-4 \zeta ^6 \lambda  (\lambda +1) \left(8 \lambda ^4+100 \lambda ^3+590 \lambda ^2+1431 \lambda +612\right)+144 \zeta ^4 \lambda ^2 (\lambda +1) (2 \lambda +9) \left(2 \lambda ^2+11 \lambda +8\right)\\
&-192 \zeta ^2 \lambda ^3 (\lambda +1)^2 (2 \lambda +7) \left(2 \lambda ^2+5 \lambda +12\right)+4608 \lambda ^4 (\lambda +1)^2,\\
16X_{8}:=&16 \zeta ^{16}-4 \zeta ^{14} \left(12 \lambda ^2+88 \lambda +81\right)+\zeta ^{12} \left(48 \lambda ^4+832 \lambda ^3+4496 \lambda ^2+6736 \lambda +2889\right)\\
&-\zeta ^{10} \left(16 \lambda ^6+1632 \lambda ^5+17128 \lambda ^4+72456 \lambda ^3+128917 \lambda ^2+94824 \lambda +23076\right)\\
&+8 \zeta ^8 (\lambda +1) \left(400 \lambda ^6+5568 \lambda ^5+29576 \lambda ^4+76800 \lambda ^3+91173 \lambda ^2+37476 \lambda +540\right)\\
&-64 \zeta ^6 \lambda  (\lambda +1) \left(48 \lambda ^7+824 \lambda ^6+5924 \lambda ^5+22426 \lambda ^4+47488 \lambda ^3+51882 \lambda ^2+22527 \lambda +1080\right)\\
&+64 \zeta ^4 \lambda ^2 (\lambda +1)^2 \left(16 \lambda ^7+288 \lambda ^6+2344 \lambda ^5+11520 \lambda ^4+34117 \lambda ^3+57726 \lambda ^2+44892 \lambda +6480\right)\\
&+3072 \zeta ^2 \lambda ^3 (\lambda +1)^2 \left(4 \lambda ^5-16 \lambda ^4-253 \lambda ^3-842 \lambda ^2-1044 \lambda -360\right)\\
&+6144 \lambda ^4 (\lambda +1)^3 \left(4 \lambda ^5+32 \lambda ^4+107 \lambda ^3+202 \lambda ^2+258 \lambda +180\right),\\
16X_{9}:=&16 \zeta ^{18}-12 \zeta ^{16} \left(4 \lambda ^2+36 \lambda +41\right)+3 \zeta ^{14} \left(16 \lambda ^4+480 \lambda ^3+3000 \lambda ^2+5272 \lambda +2601\right)\\
&-\zeta ^{12} \left(16 \lambda ^6+3504 \lambda ^5+40344 \lambda ^4+176296 \lambda ^3+324993 \lambda ^2+246267 \lambda +61254\right)\\
&+12 \zeta ^{10} (\lambda +1) \left(496 \lambda ^6+7456 \lambda ^5+43448 \lambda ^4+122600 \lambda ^3+157107 \lambda ^2+70500 \lambda +2880\right)\\
&-24 \zeta ^8 \lambda  (\lambda +1) \left(208 \lambda ^7+3872 \lambda ^6+30552 \lambda ^5+129520 \lambda ^4+310013 \lambda ^3+384306 \lambda ^2+198840 \lambda +23400\right)\\
&+96 \zeta ^6 \lambda ^2 (\lambda +1)^2 \left(16 \lambda ^7+320 \lambda ^6+2904 \lambda ^5+16144 \lambda ^4+57113 \lambda ^3+121224 \lambda ^2+123780 \lambda +36000\right)\\
&-4608 \zeta ^4 \lambda ^3 (\lambda +1)^2 \left(8 \lambda ^5+180 \lambda ^4+1214 \lambda ^3+3687 \lambda ^2+4880 \lambda +2100\right)\\
&+9216 \zeta ^2 \lambda ^4 (\lambda +1)^3 \left(8 \lambda ^5+84 \lambda ^4+374 \lambda ^3+939 \lambda ^2+1526 \lambda +1200\right)-1105920 \lambda ^5 (\lambda +1)^3 (\lambda +2).
\end{align*}
(2) Explicit forms of $Y_{i}(\zeta^2)$ from Table \ref{Table-1-0} for $i=2,3,4,5,6,8,9,11,12,15,16$.
\begin{align*}
4Y_{2}:=&4 \zeta ^4-72 \zeta ^2+99,\\
16Y_{3}:=&16 \zeta ^6-300 \zeta ^4+2187 \zeta ^2-17622,\\
16Y_{4}:=&16 \zeta ^8-492 \zeta ^6+7803 \zeta ^4-61254 \zeta ^2+34560,\\
96Y_{5}:=&96 \zeta ^{10}-11944 \zeta ^8+291822 \zeta ^6-2328885 \zeta ^4+6414408 \zeta ^2-26226720,\\
64Y_{6}:=&64 \zeta ^{12}-6144 \zeta ^{10}+153144 \zeta ^8-1548729 \zeta ^6+8883864 \zeta ^4-40214880 \zeta ^2+11404800,\\
768Y_{8}:=&768 \zeta ^{16}-62784 \zeta ^{14}+2295136 \zeta ^{12}-83342508 \zeta ^{10}+2029268835 \zeta ^8-23810490144 \zeta ^6\\
&+113854569840 \zeta ^4-166228536960 \zeta ^2+405791078400,\\
256Y_{9}:=&256 \zeta ^{18}-26496 \zeta ^{16}+1540080 \zeta ^{14}-68175864 \zeta ^{12}+1655096733 \zeta ^{10}-19800590616 \zeta ^8\\
&+111180042720 \zeta ^6-320921429760 \zeta ^4+944392089600 \zeta ^2-151538688000,\\
12288Y_{11}:=&12288\zeta^{22}-4841984 \zeta ^{20}+498115584 \zeta ^{18}-22487382864 \zeta ^{16}+544353295644 \zeta ^{14}\nonumber\\
&-8913114152811 \zeta ^{12}+129651131201400 \zeta ^{10}-1583714648570400 \zeta ^8+11730625253222400 \zeta ^6\nonumber\\
&-38412263198822400 \zeta ^4+34441015482777600 \zeta ^2-55575779180544000,\\
256Y_{12}:=&256\zeta ^{24}-76800 \zeta ^{22}+7503840 \zeta ^{20}-351447300 \zeta ^{18}+9849042720 \zeta ^{16}-211834514439 \zeta ^{14}\nonumber\\
&+3852242371395 \zeta ^{12}-50175327507750 \zeta ^{10}+381546799239600 \zeta ^8-1420808577840000 \zeta ^6\nonumber\\
&+2410899503270400 \zeta ^4-4958458219008000 \zeta ^2+487223009280000,\\
16384Y_{15}:=&16384\zeta ^{30}-3735552 \zeta ^{28}+399025152 \zeta ^{26}-37752509952 \zeta ^{24}+3191338360512 \zeta ^{22}\nonumber\\
&-180668317734288 \zeta ^{20}+6264700943957640 \zeta ^{18}-134725503695291055 \zeta ^{16}\nonumber\\
&+1865006354944081260 \zeta ^{14}-18155774036247088680 \zeta ^{12}+149392043600809982400 \zeta ^{10}\nonumber\\
&-1120981072745389824000 \zeta ^8+5857439245474692096000 \zeta ^6-14118903261833158656000 \zeta ^4\nonumber\\
&+8447643101325754368000 \zeta ^2-9665272382965678080000,\\
16384Y_{16}:=&16384 \zeta ^{32}-4300800 \zeta ^{30}+627102720 \zeta ^{28}-76294333440 \zeta ^{26}+6704550057600 \zeta ^{24}\nonumber\\
&-371834537851296 \zeta ^{22}+12845996840802960 \zeta ^{20}-285997297566804075 \zeta ^{18}\nonumber\\
&+4385348005754977200 \zeta ^{16}-52253512139882473200 \zeta ^{14}+541564860087827107200 \zeta ^{12}\nonumber\\
&-4545805285706008320000 \zeta ^{10}+24640777114228684800000 \zeta ^8-66600404891865169920000 \zeta ^6\nonumber\\
&+72962121079040901120000 \zeta ^4-110368487460632002560000 \zeta ^2+7033911643629158400000.
\end{align*}
(3) Explicit forms of $W_i(\eta^2)$ from Table \ref{tab1:rec-coeffs} for $i=2,3,4,5,6,8,9,11,12,15,16$.
\begin{align*}
2W_2:=	&2\eta^4-15\eta^2+4,\\
2W_3:=	 &2\eta^6-8\eta^4+11\eta^2-168,\\
4W_4:=	 &4\eta^8-37\eta^6+295\eta^4-1248\eta^2+144,\\
8W_{5}:=&8\eta^{10}-773\eta^8+6761\eta^6-12522\eta^4-3072\eta^2-58176,\\
8W_{6}:=&8\eta^{12}-440\eta^{10}+3905\eta^8-11727\eta^6+32340\eta^4-110880\eta^2+6912,\\
16W_{8}:=&16\eta^{16}-436\eta^{14}+4129\eta^{12}-242046\eta^{10}+3296421\eta^8-14682816\eta^6+16861824\eta^4\\
&+8259840\eta^2+26542080,\\
16W_{9}:=&16\eta^{18}-636 \eta^{16}+21099\eta^{14}-723251\eta^{12}+8285121\eta^{10}-36265401\eta^8+56679264\eta^6\\
&-61715520\eta^4+216483840\eta^2-8294400,\\
16W_{11}:=&16\eta^{22}-5236\eta^{20}+268879\eta^{18}-4500486 \eta^{16}+26032641\eta^{14}-238158426\eta^{12}\\
&+4432046004\eta^{10}-34557365880\eta^8+105003993600\eta^6-85274726400\eta^4\\
&-41286205440\eta^2-56335564800,\\
64W_{12}:=&64\eta^{24}-12576\eta^{22}+603648\eta^{20}-10953887\eta^{18}+116382591\eta^{16}-2014897041\eta^{14}\\
&+27370556433\eta^{12}-179369958168\eta^{10}+520642779840\eta^8-515759892480\eta^6\\
&+190938746880\eta^4-932224204800\eta^2+23887872000,\\
256W_{15}:=&256\eta^{30}-24768\eta^{28}+984608\eta^{26}-99913860\eta^{24}+7272545505\eta^{22}-229238078161\eta^{20}\\
&+3336622422557\eta^{18}-22813872410709\eta^{16}+71457280019892\eta^{14}-369746460729216\eta^{12}\\
&+4741162121583360\eta^{10}-27852122634854400\eta^8+65627975905689600\eta^6\\
&-40565849142067200\eta^4-17279072442777600\eta^2-12917032353792000,\\
256W_{16}:=&256\eta^{32}-30592\eta^{30}+2610288 \eta^{28}-275214104\eta^{26}+16234800485 \eta^{24}-466147159164\eta^{22}\\
&+6682602532554 \eta^{20}-50829324276348\eta^{18}+308250455440113 \eta^{16}\\
&-3063456569159424\eta^{14}+26975907910160640 \eta^{12}-127620988169748480 \eta^{10}\\
&+279555488146022400\eta^8-196728971172249600\eta^6+7551166552473600\eta^4\\
&-185311791415296000 \eta^2+3371056496640000.
\end{align*}
(4) Explicit forms of $G_i(\eta^2)$ from Table \ref{tab2:rec-coeffs} for $i=2,3,4,5,6,8,9,11,12,15,16$.
\begin{align*}
G_{2}:=&\eta^4-13\eta^2+6,\\
G_{3}:=&\eta^6-7\eta^4+12\eta^2-360,\\
G_{4}:=&\eta^8-14\eta^6+165\eta^4-1224\eta^2+216,\\
G_{5}:=&\eta^{10}-133\eta^8+1929\eta^6-5940\eta^4-4320\eta^2-64800,\\	 G_{6}:=&\eta^{12}-77\eta^{10}+1101\eta^8-4932\eta^6+17361\eta^4-113400\eta^2+9720,\\	 2G_{8}:=&2\eta^{16}-78\eta^{14}+1089 \eta^{12}-70785\eta^{10}+1578285\eta^8-11659950 \eta^6+21402225 \eta^4\\
&+22963500\eta^2+76545000,\\
G_{9}:=&\eta^{18}-54 \eta^{16}+2286\eta^{14}-105966\eta^{12}+1927125\eta^{10}-13601520 \eta^8+31122225 \eta^6\\
&-30362850\eta^4+273375000\eta^2-13122000,\\
2G_{11}:=&2\eta^{22}-829\eta^{20}+58275\eta^{18}-1433430\eta^{16}+13530150\eta^{14}-117734175\eta^{12}\\
&+3083335875\eta^{10}-41419987875\eta^8+211076026875\eta^6-264998790000\eta^4\\
&-273265650000\eta^2-385786800000,\\
G_{12}:=&\eta^{24}-252\eta^{22}+16437\eta^{20}-425835\eta^{18}+6134265\eta^{16}-120820950\eta^{14}+2458605150\eta^{12}\\
&-26341594875\eta^{10}+124060399875\eta^8-176784778125\eta^6-8439086250\eta^4\\
&-699238575000\eta^2+20667150000,\\
G_{15}:=&\eta^{30}-126\eta^{28}+6654\eta^{26}-749439\eta^{24}+71087220\eta^{22}-3142068435\eta^{20}\\
&+67831410825\eta^{18}-727590460275\eta^{16}+3449814897375\eta^{14}-12254173249500\eta^{12}\\
&+266877060778125\eta^{10}-2987052554075625\eta^8+12107364149475000\eta^6\\
&-11246939027100000\eta^4-10207918728000000\eta^2-7655939046000000,\\
G_{16}:=&\eta^{32}-152\eta^{30}+15804\eta^{28}-2021700\eta^{26}+157758030\eta^{24}-6308511300\eta^{22}\\
&+132164631450\eta^{20}-1493816296500\eta^{18}+10920101842125\eta^{16}-117706861125000\eta^{14}\\
&+1646298202263750\eta^{12}-13233804005362500\eta^{10}+47682964163165625\eta^8\\
&-47494695181500000\eta^6-22188045568500000\eta^4-93572588340000000\eta^2\\
&+1822842630000000.
		\end{align*}

\section{When $\gamma=0$, Mathematica code to compute $\alpha_n$ and $\beta_n$ (for $n=0,1,2,3,4$) given in Table \ref{tab1:rec-coeffs}, with $\beta_0$ arbitrary.}\label{A4}
\begin{lstlisting}[language=Mathematica]
(*initialization*)
P[0, x_] = 1;
P[-1, x_] = 0;
w[x_] = x*(1 - x^2)^(-1/2)*Exp[I*\[Eta] x];
$Assumptions = BesselJ[0, \[Eta]] == 0 && \[Eta] > 0;
(**when \[Gamma]=1/2 and \[Eta]=m \[Pi], with m being an arbitrary nonzero positive integer, the above two lines of code should be replaced by
      w[x_] = x*Exp[I*\[Eta] x];
      $Assumptions = {m \[Element] Integers, m > 0};
      \[Eta] = m*\[Pi];**)

(*definitions of h[0] and \[Alpha][0]*)
h[0] = Integrate[w[x], {x, -1, 1}];
\[Alpha][0] = Simplify[-I/h[0]*Integrate[x*w[x], {x, -1, 1}] /.
              BesselJ[2, \[Eta]] -> (2/\[Eta]) BesselJ[1, \[Eta]]];
(**when \[Gamma]=1/2, skip the substitution in the previous line**)

(*calculations of \[Alpha][n] and \[Beta][n] for n=1,2,3,4*)
results =
  Table[If[n > 0,
    P[n, x_] = (x - I*\[Alpha][n - 1])*P[n - 1, x] - \[Beta][n - 1]*P[n - 2, x];
    h[n] = Simplify[ Integrate[P[n, x]^2*w[x], {x, -1, 1}] /.
       BesselJ[2, \[Eta]] -> (2/\[Eta]) BesselJ[1, \[Eta]]];
  (**when \[lambda]=1/2, skip the substitution in the previous line**)
    \[Beta][n] = h[n]/h[n - 1];
    \[Alpha][n] =
     Map[Factor,
      Apart[
       Simplify[
         Simplify[-I/h[n]*Integrate[x*P[n, x]^2*w[x], {x, -1, 1}] /.
            BesselJ[2, \[Eta]] -> (2/\[Eta]) BesselJ[1, \[Eta]]] /.
          BesselJ[1, x] -> j] /. j -> BesselJ[1, x],\[Eta]]]];
   {n, \[Alpha][n], \[Beta][n]}, {n, 1, 4}];
   (**when \[Gamma]=1/2, skip the substitutions in the previous line**)

(**when \[Gamma]=1/2 and \[zeta]=m \[Pi], the following two lines of code need to be evaluated.
  Clear[\[Eta]];
  resultsFixed = results //. {m^k_*\[Pi]^k_ -> \[Eta]^k, m*\[Pi] -> \[Eta]};**)

(*output the values of \[Alpha][n] and \[Beta][n] for n=0,1,2,3,4 in tabular form*)
TableForm[Join[{{0, \[Alpha][0], \[Beta][0]}}, results],TableHeadings -> {None, {"n", "\[Alpha][n]", "\[Beta][n]"}}]
(**when \[Gamma]=1/2, replace "results" with "resultsfixed" in the line above**)
\end{lstlisting}

\section{When $\gamma=0$, an alternative derivation of the initial values $\{\alpha_0,\alpha_1,\alpha_2,\alpha_3\}$ and $\{\beta_1,\beta_2,\beta_3,\beta_4\}$ given in Table \ref{tab1:rec-coeffs}.}\label{J0D-int}
We need the following lemma.
\begin{lemma}
Supposing $\eta$ is a positive zero of the Bessel function $J_0$, we have the following six integral identities:
\begin{align*}
\frac{1}{\pi}\int_{-1}^{1}x(1-x^2)^{-\frac{1}{2}}\exp(\mathit{i}\eta x)\, dx=&i J_1(\eta),\\
\frac{1}{\pi}\int_{-1}^{1}x^{2}(1-x^2)^{-\frac{1}{2}}\exp(\mathit{i}\eta x)\, dx=&-\frac{J_1(\eta)}{\eta},\\
\frac{1}{\pi}\int_{-1}^{1}x^{3}(1-x^2)^{-\frac{1}{2}}\exp(\mathit{i}\eta x)\, dx=&i\left(-\frac{2}{\eta^2} +1\right) J_1(\eta),\\
\frac{1}{\pi}\int_{-1}^{1}x^{4}(1-x^2)^{-\frac{1}{2}}\exp(\mathit{i}\eta x)\, dx=&2\left(\frac{3}{\eta^3}-\frac{1}{\eta}\right)J_1(\eta),\\
\frac{1}{\pi}\int_{-1}^{1}x^{5}(1-x^2)^{-\frac{1}{2}}\exp(\mathit{i}\eta x)\, dx=&i\left(\frac{24}{\eta^4}-\frac{7}{\eta^2}+1\right)J_1(\eta),\\
\frac{1}{\pi}\int_{-1}^{1}x^{6}(1-x^2)^{-\frac{1}{2}}\exp(\mathit{i}\eta x)\, dx=&-3\left(\frac{40}{\eta^5}-\frac{11}{\eta^3}+\frac{1}{\eta}\right)J_1(\eta),\\
\frac{1}{\pi}\int_{-1}^{1}x^{7}(1-x^2)^{-\frac{1}{2}}\exp(\mathit{i}\eta x)\, dx=&i \left(-\frac{720}{\eta^6}+\frac{192}{\eta^4}-\frac{15}{\eta^2}+1\right)J_1(\eta),\\
\frac{1}{\pi}\int_{-1}^{1}x^{8}(1-x^2)^{-\frac{1}{2}}\exp(\mathit{i}\eta x)\, dx=&4\left(\frac{1260}{\eta ^7}-\frac{330}{\eta ^5}+\frac{24}{\eta
   ^3}-\frac{1}{\eta }\right) J_1(\eta ),\\
\frac{1}{\pi}\int_{-1}^{1}x^{9}(1-x^2)^{-\frac{1}{2}}\exp(\mathit{i}\eta x)\, dx=&i\left(\frac{40320}{\eta ^8}-\frac{10440}{\eta ^6}+\frac{729}{\eta
   ^4}-\frac{26}{\eta ^2}+1\right) J_1(\eta ).
\end{align*}
\end{lemma}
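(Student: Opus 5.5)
The plan is to reduce all nine identities to derivatives of a single Bessel integral. Take $\nu=0$ in \eqref{h0-1}. Then $A=\sqrt{\pi}\,\Gamma(1/2)=\pi$, and
\[
\int_{-1}^{1}(1-x^2)^{-\frac12}\exp(\mathit{i}zx)\,dx=\pi J_0(z).
\]
Differentiating $k$ times with respect to $z$ gives
\[
\mathit{i}^k\int_{-1}^{1}x^k(1-x^2)^{-\frac12}\exp(\mathit{i}zx)\,dx=\pi J_0^{(k)}(z).
\]
Differentiation under the integral sign is legitimate because $|x^k(1-x^2)^{-1/2}\exp(\mathit{i}zx)|\le(1-x^2)^{-1/2}$, which is integrable on $[-1,1]$ and independent of $z$. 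So dominated convergence applies at every order. This is the same mechanism as in the proof of Lemma \ref{w1L}, now specialised to $\nu=0$.

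Next I express every derivative of $J_0$ in the basis $\{J_0,J_1\}$ with rational coefficients in $z$. The two relations needed come from \eqref{Bs3} and \eqref{Bs4} with $\nu=0$:
\[
J_0'(z)=-J_1(z),\qquad J_1'(z)=J_0(z)-\frac{1}{z}J_1(z).
\]
Write $J_0^{(k)}(z)=a_k(z)J_0(z)+b_k(z)J_1(z)$ with $a_0=1$ and $b_0=0$. Differentiating once more and applying the two relations yields the recursion
\[
a_{k+1}=a_k'+b_k,\qquad b_{k+1}=b_k'-a_k-\frac{b_k}{z}.
\]
Now impose $J_0(\eta)=0$; the $a_k$ terms vanish, and each identity reduces to
\[
\frac{1}{\pi}\int_{-1}^{1}x^k(1-x^2)^{-\frac12}\exp(\mathit{i}\eta x)\,dx=\mathit{i}^{-k}\,b_k(\eta)\,J_1(\eta).
\]
The $a_k$ are still needed inside the recursion, but they never appear in the final answers.

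Two sanity checks confirm the signs. For $k=1$ the recursion gives $a_1=0$ and $b_1=-1$, so the integral is $-\mathit{i}\cdot(-1)J_1=\mathit{i}J_1(\eta)$, matching the first identity. For $k=2$ it gives $a_2=-1$ and $b_2=1/z$, so the integral is $-J_1(\eta)/\eta$, matching the second. I will then iterate the recursion up to $k=9$ and read off $b_k(\eta)$, multiplying by $\mathit{i}^{-k}$ each time. For odd $k$ this produces a factor $\mathit{i}$ and even powers of $1/\eta$; for even $k$ it produces a real factor and odd powers of $1/\eta$.

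The main obstacle is purely the bookkeeping. The rational functions $a_k,b_k$ grow quickly, for example $b_9$ contains $40320/\eta^8=8!/\eta^8$. To keep it manageable, I will record $a_k$ and $b_k$ as Laurent polynomials in $1/z$ and track only their coefficient vectors. One useful structural check is that $a_k$ and $b_k$ have opposite parity in $z$ at each step. Finally I will verify the results against Mathematica, as in Appendix \ref{A4}.
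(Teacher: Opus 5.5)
Your proposal is correct and follows essentially the same route as the paper. Both differentiate the $\nu=0$ integral $\frac{1}{\pi}\int_{-1}^1(1-x^2)^{-1/2}\exp(\mathit{i}zx)\,dx=J_0(z)$ $m$ times, reduce $J_0^{(m)}$ to the basis $\{J_0,J_1\}$ via $J_0'=-J_1$ and $J_1'=J_0-J_1/z$, and then impose $J_0(\eta)=0$. Your $(a_k,b_k)$ recursion is just a systematic packaging of the paper's step-by-step differentiation, and iterating it does reproduce all nine stated coefficients, e.g. $b_9=-40320/z^8+10440/z^6-729/z^4+26/z^2-1$.
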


\begin{proof}
From \cite[eq. 10, p. 912]{GR}, we know that
\begin{align}\label{bsint}
\frac{1}{\pi}\int_{-1}^1(1-x^2)^{-\frac{1}{2}}\exp(i\eta x)dx=J_0(\eta).
\end{align}
Differentiating it $m$ times with respect to $\eta$ yields
\begin{align}\label{DJ0m}
\frac{i^m}{\pi}\int_{-1}^{1}x^{m}(1-x^2)^{-\frac{1}{2}}\exp(\mathit{i}\eta x)\, dx=\frac{d^m}{d\eta^m}J_0(\eta)=:J_0^{(m)}(\eta).
\end{align}
Recall the following properties of the Bessel function \cite[(5.3.7)]{Lebedev}:
\begin{align*}
\frac{d}{dz}(z^{\nu}J_{\nu}(z))=&z^{\nu}J_{\nu-1}(z),&\frac{d}{dz}(z^{-\nu}J_{\nu}(z))=-z^{-\nu}J_{\nu+1}(z).
\end{align*}
With $\nu=1$ in the first identity and $\nu=0$ in the second one, we have
\begin{align}
\frac{d}{dz}(J_{1}(z))=&-\frac{J_1(z)}{z}+J_{0}(z),\label{DJ1}\\
\frac{d}{dz}(J_{0}(z))=&-J_{1}(z).\label{DJ0}
\end{align}
Setting $z=\eta$ in \eqref{DJ0} yields
\[J_0'(\eta)=-J_1(\eta).\]
 Differentiating \eqref{DJ0} with respect to $z$, in light of \eqref{DJ1}, we get
\begin{align}\label{DJ0-2}
\frac{d^2}{dz^2}(J_{0}(z))=\frac{J_1(z)}{z}-J_{0}(z).
\end{align}
In view of $J_0(\eta)=0$, it follows that
\[J_{0}''(\eta)=\frac{J_1(\eta)}{\eta}.\]
Taking the derivative of \eqref{DJ0-2} with respect to $z$, on account of \eqref{DJ1}-\eqref{DJ0}, we produce
\begin{align}\label{DJ0-3}
\frac{d^3}{dz^3}(J_{0}(z))=\left(-\frac{2}{z^2}+1\right)J_1(z)+\frac{J_0(z)}{z},
\end{align}
so that
\begin{align*}
J_{0}'''(\eta)=\left(-\frac{2}{\eta^2}+1\right)J_1(\eta).
\end{align*}
Following the same reasoning as in the determination of $J_0''(\eta)$ and $J_0'''(\eta)$, we obtain
\begin{align*}
J_0^{(4)}(\eta)=&\left(\frac{6}{\eta^3}-\frac{2}{\eta}\right)J_1(\eta),\\
J_0^{(5)}(\eta)=&\left(-\frac{24}{\eta^4}+\frac{7}{\eta^2}-1\right)J_1(\eta),\\
J_0^{(6)}(\eta)=&\left(\frac{120}{\eta^5}-\frac{33}{\eta^3}+\frac{3}{\eta}\right)J_1(\eta),\\
J_0^{(7)}(\eta)=&\left(\frac{-720}{\eta^6}+\frac{192}{\eta^4}-\frac{15}{\eta^2}+1\right)J_1(\eta),\\
J_0^{(8)}(\eta)=&\left(\frac{5040}{\eta ^7}-\frac{1320}{\eta ^5}+\frac{96}{\eta
   ^3}-\frac{4}{\eta }\right) J_1(\eta ),\\
J_0^{(9)}(\eta)=&\left(-\frac{40320}{\eta ^8}+\frac{10440}{\eta ^6}-\frac{729}{\eta
   ^4}+\frac{26}{\eta ^2}-1\right) J_1(\eta).
\end{align*}
Substituting $J_0^{(m)}(\eta)$ for $1\leq m\leq 9$ into \eqref{DJ0m} leads us to the desired integral formulas.
\end{proof}

Using \eqref{pf0-eq2}-\eqref{pf0-eq5}, with the aid of the integral identities listed in the above lemma, by a reasoning similar to the one given below Lemma \ref{w1L}, we obtain the values of $\alpha_n$  for $n=0,1,2,3$ and $\beta_n$ for $n=1,2,3,4$.

%
%
%
%
%

\section{When $\gamma=1/2$ and $\eta=100\pi$, Mathematica code to compute $\alpha_n$ and $\beta_n$ for $0\leq n\leq30$, with $\beta_0$ arbitrary.}\label{A6}

\begin{lstlisting}[language=Mathematica]
Clear[\[Gamma], \[Alpha], \[Beta], \[Eta]];
\[Gamma] = 1/2;

(*the initial values of \[Alpha][n] for n=0,1,2,3*)
\[Alpha][0] = 2/\[Eta];
\[Alpha][1] = 4/\[Eta] - (4 \[Eta])/(-2 + \[Eta]^2);
\[Alpha][2] = 6/\[Eta] + (4 \[Eta])/(-2 + \[Eta]^2) - (2 \[Eta] (-24 +
              \[Eta]^2))/(6 - 13 \[Eta]^2 + \[Eta]^4);
\[Alpha][3] = 8/\[Eta] + (2 \[Eta] (-24 + \[Eta]^2))/(6 - 13 \[Eta]^2 + \[Eta]^4) - (12 \[Eta] (-360 + 12 \[Eta]^2 - 7 \[Eta]^4 + \[Eta]^6))/(216 - 1224 \[Eta]^2 + 165 \[Eta]^4 - 14 \[Eta]^6 + \[Eta]^8);

(*the initial values of \[Beta][n] for n=1,2,3,4, with \[Beta][0] arbitrary*)
\[Beta][1] = (\[Eta]^2 - 2)/\[Eta]^2;
\[Beta][2] = -((4 (6 - 13 \[Eta]^2 + \[Eta]^4))/(\[Eta]^2 (-2 + \[Eta]^2)^2));
\[Beta][3] = ((-2 + \[Eta]^2) (216 - 1224 \[Eta]^2 + 165 \[Eta]^4 - 14 \[Eta]^6 + \[Eta]^8))/(\[Eta]^2 (6 - 13 \[Eta]^2 + \[Eta]^4)^2);
\[Beta][4] = -(16 (6 - 13 \[Eta]^2 + \[Eta]^4) (9720 + \[Eta]^2 (-113400 + \[Eta]^2 (17361 - 4932 \[Eta]^2 + 1101 \[Eta]^4 - 77 \[Eta]^6 + \[Eta]^8))))/(\[Eta]^2 (216 - 1224 \[Eta]^2 + 165 \[Eta]^4 - 14 \[Eta]^6 + \[Eta]^8)^2);

(*taking \[Eta]=100*\[Pi]*)
\[Eta] = N[100 \[Pi], 100];
(**for \[Eta]=\[Pi],replacing this line by \[Eta]=N[\[Pi],100]**)

(*computing \[Alpha][i] and \[Beta][i] for i=4,5,6,7,by using the
above initial values and the two difference equations for the
recurrence coefficients*)
Do[\[Alpha][n + 1] =
   1/(\[Eta] \[Beta][n + 1]) (-3 \[Alpha][n]^2 - 2 \[Alpha][-1 + n] \[Alpha][n] - \[Alpha][-1 + n]^2 + 2 (\[Beta][n] - 1) + \[Beta][-1 + n] (\[Eta] (2\[Alpha][-1 + n] + \[Alpha][-2 + n]) - 2 n - 2 \[Gamma] + 3) + (\[Alpha][-1 + n] -\[Alpha][           n]) ((\[Alpha][-2 + n] - \[Alpha][n]) (\[Eta] (\[Alpha][n] + \[Alpha][-1 + n] + \[Alpha][-2 + n]) - 2 n - 2 \[Gamma]) + \[Eta] (\[Beta][n] - \[Beta][-1 + n])) + (\[Alpha][-1 + n] - \[Alpha][n])/(\[Alpha][-2 + n] - \[Alpha][-1 + n]) (\[Alpha][-1 + n]^2 + 5 \[Alpha][-2 + n]^2 + 2 + \[Beta][n] (\[Eta] (\[Alpha][n] + 2 \[Alpha][-1 + n]) - 2 n - 2 \[Gamma] - 1) -  2 \[Beta][-1 + n] + \[Beta][-2 + n] (2 n + 2 \[Gamma] - 5 - \[Eta] (2 \[Alpha][-2 + n] + \[Alpha][-3 + n])))) - 2 \[Alpha][n] + (2 n + 2 \[Gamma] + 3)/\[Eta];

  \[Beta][n + 2] = \[Alpha][n + 1]^2 + \[Alpha][n]*\[Alpha][n + 1] + \[Alpha][n]^2 - \[Alpha][-1 + n]^2 - \[Beta][n + 1] + \[Beta][n] + \[Beta][-1 + n] - 2 (n + 2 + \[Gamma]) \[Alpha][n + 1]/\[Eta] - (2 n + 2 \[Gamma] + 5) \[Alpha][n]/\[Eta] + 2 (n + \[Gamma] - 1) \[Alpha][-1 + n]/\[Eta] + \[Alpha][n]/(\[Eta]*\[Beta][n + 1]) (\[Alpha][n]^2 + 1 + \[Beta][n] (2 n + 2 \[Gamma] - 1 - \[Eta] (\[Alpha][n] + \[Alpha][-1 + n]))) + (\[Beta][n + 1] - \[Beta][n])/(\[Eta]*\[Beta][n + 1] (\[Beta][-1 + n] - \[Beta][n])) (\[Beta][   n] (\[Eta] (\[Alpha][n]^2 + \[Alpha][n]*\[Alpha][-1 + n] - \[Beta][n + 1]) -          2 (n + 1 + \[Gamma]) \[Alpha][n] - 3 \[Alpha][-1 + n]) - \[Beta][-1 + n] (\[Eta] (\[Alpha][-2 + n]^2 + \[Alpha][-2 + n]*\[Alpha][-1 + n] - \[Beta][-2 + n]) - 2 (n + \[Gamma] - 2) \[Alpha][-2 + n] + 3 \[Alpha][-1 + n]) + \[Alpha][-1 + n]^3 + \[Alpha][-1 + n]);, {n, 3, 29}];

(*Format number with 18 significant digits;using 18 decimal places
for|x|>1*)
formatNumber[x_] :=
Module[{},If[Abs[x] > 1, NumberForm[x, {Infinity, 18}], NumberForm[x, 18]]];
(**When \[Eta]=\[Pi],18 is replaced by 16**)

(*Generating formatted table for \[Alpha][n] and \[Beta][n],0 \[LessEqual] n\[LessEqual]30; \[Beta][0] keeping unformatted,others using formatNumber*)
Grid[Prepend[Table[{n, formatNumber[FullSimplify[\[Alpha][n]]],
    If[n == 0, \[Beta][n], formatNumber[FullSimplify[\[Beta][n]]]]}, {n, 0, 30}], {"n", "\[Alpha][n]", "\[Beta][n]"}], Frame -> All]
\end{lstlisting}

\end{appendices}}

\end{document}